\newtheorem{proposition}{Proposition}
\begin{document}

\title{Localized and Distributed Beyond Diagonal Reconfigurable Intelligent Surfaces with Lossy Interconnections: Modeling and Optimization}

\author{Matteo~Nerini,~\IEEEmembership{Member,~IEEE},
        Golsa~Ghiaasi,~\IEEEmembership{Member,~IEEE},
        Bruno~Clerckx,~\IEEEmembership{Fellow,~IEEE}

\thanks{This work has been partially supported by UKRI grant EP/Y004086/1, EP/X040569/1, EP/Y037197/1, EP/X04047X/1, EP/Y037243/1.}
\thanks{M. Nerini and B. Clerckx are with the Department of Electrical and Electronic Engineering, Imperial College London, London SW7 2AZ, U.K. (e-mail: \{m.nerini20, b.clerckx\}@imperial.ac.uk).}
\thanks{G. Ghiaasi is with Silicon Austria Labs (SAL), Graz A-8010, Austria (e-mail: golsa.ghiaasi@ieee.org).}}

\maketitle

\begin{abstract}
Reconfigurable intelligent surface (RIS) is a key technology to control the communication environment in future wireless networks.
Recently, beyond diagonal RIS (BD-RIS) emerged as a generalization of RIS achieving larger coverage through additional tunable impedance components interconnecting the RIS elements.
However, conventional RIS and BD-RIS can effectively serve only users in their proximity, resulting in limited coverage.
To overcome this limitation, in this paper, we investigate distributed RIS, whose elements are distributed over a wide region, in opposition to localized RIS commonly considered in the literature.
The scaling laws of distributed BD-RIS reveal that it offers significant gains over distributed conventional RIS and localized BD-RIS, enabled by its interconnections allowing signal propagation within the BD-RIS.
To assess the practical performance of distributed BD-RIS, we model and optimize BD-RIS with lossy interconnections through transmission line theory.
Our model accounts for phase changes and losses over the BD-RIS interconnections arising when the interconnection lengths are not much smaller than the wavelength.
Numerical results show that the performance of localized BD-RIS is only slightly impacted by losses, given the short interconnection lengths.
Besides, distributed BD-RIS can achieve orders of magnitude of gains over conventional RIS, even in the presence of low losses.
\end{abstract}

\glsresetall

\begin{IEEEkeywords}
Beyond diagonal reconfigurable intelligent surface (BD-RIS), losses, optimization, transmission line.
\end{IEEEkeywords}

\section{Introduction}

Reconfigurable intelligent surface (RIS) is a promising technology allowing the control of the propagation environment in wireless networks \cite{wu20a,wu21}.
RIS is a technology that involves deploying arrays consisting of numerous individually controllable elements.
The reflection coefficients of these elements can be adjusted in real-time to manipulate electromagnetic waves, allowing for dynamic control over the wireless communication environment and the realization of smart radio environments \cite{dir20}.
RIS presents significant appeal owing to its capacity to optimize signal propagation in a cost-effective manner and with minimal power consumption.

RIS has been used to improve wireless communication systems under several aspects, such as to minimize the transmit power \cite{wu19}, to maximize the achievable rate \cite{yan20}, to maximize the weighted sum-rate \cite{guo20a}, in narrow-band as well as wide-band systems \cite{li21}.
RIS has been also applied to further improve the data rate by modulating and encoding data into the reconfigurable elements \cite{bas20,guo20b}.
In addition to communication systems, RISs have been also employed to enhance the efficiency of \gls{wpt} systems \cite{fen22}, \gls{swipt} systems \cite{zha22}, and localization systems \cite{elz21}.
The practical problem of channel estimation in RIS-aided systems has been tackled in \cite{you20}.
Besides, the performance of practical RIS based on discrete reflection coefficients has been analyzed in \cite{wu20b}.
Finally, prototypes of RIS have been presented in \cite{dai20,zha21,rao22}.

Conventionally, RIS has been realized by reconfiguring each RIS element through a tunable load, leading to a diagonal phase shift matrix with limited flexibility.
To improve the limited flexibility of conventional RIS, beyond diagonal RIS (BD-RIS) has been recently proposed \cite{li23-1}.
The key novelty of BD-RIS is the presence of tunable impedance components interconnecting the RIS elements to each other.
Depending on the topology of such interconnections, multiple BD-RIS architectures have been proposed, such as group-/fully-connected RIS \cite{she20,li22-3}, forest-/tree-connected RIS \cite{ner23-1}, and non-diagonal RIS based on dynamic interconnections \cite{li22,wan23b}.
The improved flexibility of BD-RIS enables higher performance gains than conventional RIS at the cost of a moderate increase in the circuit complexity \cite{ner23-2} and channel estimation overhead \cite{li23-3}.
Recent works on BD-RIS optimization have shown the superiority of BD-RIS in multi-antenna \cite{ner22,san23,ner21}, multi-user \cite{fan22,fan23}, and \gls{mec} \cite{mah23} systems.
Furthermore, BD-RIS proved to be especially beneficial in the presence of mutual coupling \cite{li23-4,ner23-3}.

Beyond achieving higher performance gains, BD-RIS working in the so-called hybrid mode enables both the reflection and transmission of the incident signal, allowing full-space coverage \cite{li22-1}.
To further improve the performance while preserving full-space coverage, multi-sector BD-RIS has been introduced, where the elements are divided into multiple sectors, each covering a narrow region of space \cite{li22-2}.
BD-RIS with reflective and transmissive capabilities has proved to improve the capacity and sensing precision in \gls{dfrc} systems \cite{wan23a} and to improve the sum rate and enlarge the coverage in \gls{rsma} systems \cite{li23-2}.

Although BD-RIS working in hybrid mode and multi-sector BD-RIS have shown great coverage improvements \cite{li22-1}-\cite{li23-2}, an RIS effectively enhances the link between a transmitter and a receiver only when placed close to one of them \cite{wu21}.
For this reason, an RIS can be particularly beneficial only to users in its proximity, yielding reduced coverage capabilities.
This reduced coverage limitation arises from the fact that RIS has been commonly regarded as a localized array of scattering elements, here denoted as \textit{localized RIS}.
In this study, we overcome this limitation by investigating RIS made of a distributed array of elements, in which the inter-element distance can be much longer than the wavelength, referred to as \textit{distributed RIS}.
Our focus will extend specifically to distributed BD-RIS, as they are expected to achieve particularly enhanced performance due to the presence of tunable impedance components interconnecting the RIS elements.
These interconnections have the potential to effectively guide the \gls{em} signal toward the receiver by enabling its propagation within the RIS.
Distributed BD-RIS could be implemented by integrating the RIS elements into long cables, forming long linear arrays.
Specifically, distributed BD-RIS could drive the \gls{em} signal from a fixed \gls{bs} to the mobile \glspl{ue} within a coverage area proportional to their array length in highly obstructed indoor environments, e.g., smart factories, as well as in urban outdoor settings\footnote{Note that the concept of distributed BD-RIS is related to \gls{mimo} arrays with antenna spacing larger than half-wavelength and to ``radio stripes'' \cite{sha21}, although they are fundamentally different as distributed BD-RIS is envisioned to be purely passive.
Furthermore, a distributed RIS differs from multiple RISs as multiple RISs are multiple, separate, localized RISs deployed at different points in the environment.}.

The investigation of distributed BD-RIS requires modeling the interconnections in the BD-RIS architecture by accounting for their losses, which is a challenge overlooked in previous literature on BD-RIS.
In previous works \cite{li23-1}-\cite{li23-2}, BD-RIS has been modeled through a lumped-element circuit model, valid when the physical dimensions of the RIS reconfigurable impedance network are much smaller than the wavelength.
According to the lumped-element model, voltages and currents are constant along the interconnections, which simplifies the BD-RIS modeling and optimization.
However, the length of the interconnections between the RIS elements in BD-RIS can be a considerable fraction of the wavelength in localized BD-RIS, or many wavelengths in distributed BD-RIS.
For example, even when the inter-element distance is half-wavelength, the interconnections are at least half-wavelength long.
In the case the interconnections are not much smaller than the wavelength, two critical effects take place.
First, voltages and currents vary in phase along the interconnections.
Second, voltages and currents may vary in magnitude along the interconnections due to losses.
Thus, it is crucial to develop a BD-RIS model accounting for these two effects, as they may lead to undesired changes of phase and dissipation of power within the BD-RIS circuit.
To this end, in this study, we model and optimize localized and distributed BD-RIS by characterizing its interconnections through transmission line theory.
Our contributions are summarized as follows.

\textit{First}, we propose the concept of distributed RIS, in which the antenna elements are not localized in a specific site but distributed over a wide region.
We analyze their scaling laws and quantify the gain of distributed RIS over localized RIS, and of distributed BD-RIS over distributed conventional RIS.
The derived scaling laws show that lossless distributed BD-RIS can offer substantial gains over distributed conventional RIS and localized BD-RIS, as high as several orders of magnitude.
These gains are enabled by the interconnections present in BD-RIS, which allow the guided propagation of the \gls{em} signal within the BD-RIS.

\textit{Second}, we model BD-RIS (localized and distributed) with lossy interconnections by using transmission line theory.
Specifically, we model the interconnections present in the BD-RIS circuit topology as tunable impedance components in series with lossy transmission lines.
Thus, we derive the expressions of the elements of the BD-RIS admittance matrix as functions of the tunable impedance components and the transmission line parameters.

\textit{Third}, since it is difficult to derive engineering insights from the derived BD-RIS model, we obtain three simplified models.
First, by making specific assumptions on the interconnection lengths, we show how losses impact the BD-RIS admittance matrix elements.
Second, by assuming lossless interconnections, we illustrate how the presence of long interconnections affects the lossless BD-RIS model.
Third, by assuming lossless interconnections with specific lengths, it is shown how the derived model boils down to the BD-RIS model widely considered in previous literature.

\textit{Fourth}, we optimize lossy BD-RIS based on the proposed models and assess its performance, compared with conventional RIS.
Numerical results show that the performance of localized BD-RIS is only slightly impacted by losses given the short interconnection lengths.
Besides, the performance of distributed RIS, even with losses, can achieve orders of magnitude of gains over conventional RIS.

\textit{Organization}:
In Section~\ref{sec:system}, we introduce the localized and distributed BD-RIS-aided system model.
In Section~\ref{sec:scaling-laws}, we derive and compare the scaling laws of localized and distributed BD-RIS.
In Section~\ref{sec:gen-model}, we model BD-RIS with lossy interconnections through transmission line theory.
In Section~\ref{sec:sim-model}, we derive three simplified models for lossy BD-RIS.
In Section~\ref{sec:optimization}, we optimize BD-RIS characterized with the proposed models.
In Section~\ref{sec:results}, we present the numerical results.
Finally, Section~\ref{sec:conclusion} concludes this work.

\textit{Notation}:
Vectors and matrices are denoted with bold lower and bold upper letters, respectively.
Scalars are represented with letters not in bold font.
$\Re\{a\}$, $\Im\{a\}$, and $\vert a\vert$ refer to the real part, the imaginary part, and the absolute value of a complex scalar $a$, respectively.
$[\mathbf{a}]_{i}$ and $\Vert\mathbf{a}\Vert$ refer to the $i$th element and $l_{2}$-norm of a vector $\mathbf{a}$, respectively.
$\mathbf{A}^T$, $\mathbf{A}^H$, and $[\mathbf{A}]_{i,j}$ refer to the transpose, conjugate transpose, and $(i,j)$th element of a matrix $\mathbf{A}$, respectively.
$\mathbb{R}$, $\mathbb{R}_{+}$, $\mathbb{R}_{*}$, and $\mathbb{C}$ denote the real, positive real, non-zero real, and complex number sets, respectively.
$j=\sqrt{-1}$ denotes the imaginary unit.
$\mathbf{0}$ and $\mathbf{I}$ denote an all-zero matrix and an identity matrix with appropriate dimensions, respectively.
$\mathcal{CN}(\mathbf{0},\mathbf{I})$ denotes the distribution of a circularly symmetric complex Gaussian random vector with mean vector $\mathbf{0}$ and covariance matrix $\mathbf{I}$ and $\sim$ stands for ``distributed as''.
diag$(a_1,\ldots,a_N)$ refers to a diagonal matrix with diagonal elements being $a_1,\ldots,a_N$, and diag$(\mathbf{a})$ refers to a diagonal matrix with diagonal elements being the vector $\mathbf{a}$.

\section{Localized and Distributed BD-RIS-Aided System Model}
\label{sec:system}

Consider a \gls{mimo} system between an $N_T$-antenna transmitter and an $N_R$-antenna receiver aided by an $N$-element BD-RIS.
The $N$ elements of the BD-RIS are connected to a $N$-port reconfigurable impedance network, with scattering matrix $\boldsymbol{\Theta}\in\mathbb{C}^{N\times N}$.
Given the matrix $\boldsymbol{\Theta}$, the wireless channel $\mathbf{H}\in\mathbb{C}^{N_R\times N_T}$ writes as
\begin{equation}
\mathbf{H}=\mathbf{H}_{RT}+\mathbf{H}_{R}\boldsymbol{\Theta}\mathbf{H}_{T},\label{eq:H}
\end{equation}
where $\mathbf{H}_{RT}\in\mathbb{C}^{N_{R}\times N_{T}}$, $\mathbf{H}_{R}\in\mathbb{C}^{N_{R}\times N}$, and $\mathbf{H}_{T}\in\mathbb{C}^{N\times N_{T}}$ refer to the channels from transmitter to receiver, RIS to receiver, and transmitter to RIS, respectively \cite{she20}.

In related literature, an RIS has been regarded as an antenna array with an inter-element distance comparable with the wavelength.
We refer to this type of RIS as localized RIS, as its elements are all localized in a specific site.
With a localized RIS, the channels $\mathbf{H}_{R}$ and $\mathbf{H}_{T}$ have been commonly modeled as $\mathbf{H}_{R}^{\text{Loc}}=\sqrt{\rho_{R}}\widetilde{\mathbf{H}}_{R}$ and $\mathbf{H}_{T}^{\text{Loc}}=\sqrt{\rho_{T}}\widetilde{\mathbf{H}}_{T}$, where $\rho_{R}$ and $\rho_{T}$ are the path-gains while  $\widetilde{\mathbf{H}}_{R}\in\mathbb{C}^{N_{R}\times N}$ and $\widetilde{\mathbf{H}}_{T}\in\mathbb{C}^{N\times N_{T}}$ are the small-scale fading effects.
Opposed to localized RISs, we investigate in this study the performance of distributed RISs, i.e., RISs whose elements are distributed over a wide region, with an inter-element distance that can be much longer than the wavelength (for example, greater than $10\lambda$, where $\lambda$ is the wavelength).
In the presence of a distributed RIS, the channel $\mathbf{H}_{R}$ is expressed as $\mathbf{H}_{R}^{\text{Dis}}=\widetilde{\mathbf{H}}_{R}\mathbf{R}_{R}^{1/2}$, where $\mathbf{R}_{R}=\text{diag}(\boldsymbol{\rho}_R)$, with $\boldsymbol{\rho}_R\in\mathbb{R}^{N\times 1}$ introduced such that $[\boldsymbol{\rho}_{R}]_{n}$ is the path-gain from the $n$th RIS element to the receiver.
Similarly, we have $\mathbf{H}_{T}^{\text{Dis}}=\mathbf{R}_{T}^{1/2}\widetilde{\mathbf{H}}_{T}$, where $\mathbf{R}_{T}=\text{diag}(\boldsymbol{\rho}_T)$, with $\boldsymbol{\rho}_T\in\mathbb{R}^{N\times 1}$ introduced such that $[\boldsymbol{\rho}_{T}]_{n}$ is the path-gain from the transmitter to the $n$th RIS element\footnote{As the channel expression for localized and distributed is the same, as given by \eqref{eq:H}, the channel can be estimated in the presence of a distributed RIS with the same protocols proposed for localized RIS \cite{you20,li23-3}.}.

In this BD-RIS-aided system, we denote the transmitted signal as $\mathbf{x}=\mathbf{w}s\in\mathbb{C}^{N_{T}\times1}$, where $\mathbf{w}\in\mathbb{C}^{N_{T}\times1}$ is the precoding vector subject to $\Vert\mathbf{w}\Vert=1$, and $s\in\mathbb{C}$ is the transmitted symbol with average power $P_{T}=\text{E}[\vert s\vert^2]$.
Denoting the received signal as $\mathbf{y}\in\mathbb{C}^{N_{R}\times1}$, we have $\mathbf{y}=\mathbf{H}\mathbf{x}+\mathbf{n}$, where $\mathbf{n}\in\mathbb{C}^{N_{R}\times1}$ is the \gls{awgn} at the receiver with power $\sigma_{n}^{2}$.
By using a combining vector $\mathbf{g}\in\mathbb{C}^{1\times N_{R}}$ subject to $\Vert\mathbf{g}\Vert=1$, the signal used for detection $\hat{s}=\mathbf{g}\mathbf{y}$ can be expressed as
\begin{equation}
\hat{s}=\mathbf{g}\left(\mathbf{H}_{RT}+\mathbf{H}_{R}\boldsymbol{\Theta}\mathbf{H}_{T}\right)\mathbf{w}s+\tilde{n},\label{eq:z}
\end{equation}
where $\tilde{n}=\mathbf{g}\mathbf{n}$ is the \gls{awgn} with power $\sigma_{n}^{2}$.
Thus, when reconfiguring the BD-RIS, $\boldsymbol{\Theta}$ is optimized jointly with $\mathbf{g}$ and $\mathbf{w}$ to maximize the received signal power, given by
\begin{equation}
P_R=P_T\left\vert\mathbf{g}\left(\mathbf{H}_{RT}+\mathbf{H}_{R}\boldsymbol{\Theta}\mathbf{H}_{T}\right)\mathbf{w}\right\vert^2.\label{eq:PR}
\end{equation}
In the following, we analyze and compare the received signal power scaling laws of localized and distributed BD-RIS.

\section{Scaling Laws}
\label{sec:scaling-laws}

To compare the fundamental limits of localized and distributed BD-RIS, we derive their received signal power scaling laws.
For simplicity, we consider a \gls{siso} system, i.e., $N_R=1$ and $N_T=1$, with obstructed direct channel, i.e., $\mathbf{H}_{RT}=\mathbf{0}$, and lossless BD-RIS, i.e., with unitary $\boldsymbol{\Theta}$.
With no loss of generality, we assume transmit power $P_T=1$, such that the received signal power writes as $P_R=\vert\mathbf{h}_{R}\boldsymbol{\Theta}\mathbf{h}_{T}\vert^2$.
We model the path-gain of the channels $\mathbf{h}_{R}$ and $\mathbf{h}_{T}$ through the distance-dependent model and their small-scale fading as \gls{iid} Rayleigh distributed.
Specifically, for localized BD-RIS, we have $\mathbf{h}_{i}^{\text{Loc}}\sim\mathcal{CN}(\mathbf{0},\rho_{i}\mathbf{I})$, with $\rho_{i}=C_{0}d_{i}^{-a}$, for $i\in\{R,T\}$, where $C_{0}$ refers to the path-gain at the reference distance 1~m, $a$ is the path-loss exponent, and $d_{R}$ (resp. $d_{T}$) is the distance between the RIS and the receiver (resp. the transmitter).
Similarly, for distributed BD-RIS, we have $\mathbf{h}_{i}^{\text{Dis}}\sim\mathcal{CN}(\mathbf{0},\mathbf{R}_{i})$, with $\mathbf{R}_{i}=\text{diag}(\boldsymbol{\rho}_{i})$, $[\boldsymbol{\rho}_{i}]_{n}=C_{0}[\mathbf{d}_{i}]_{n}^{-a}$, for $i\in\{R,T\}$, where $[\mathbf{d}_{R}]_{n}$ (resp. $[\mathbf{d}_{T}]_{n}$) is the distance between the $n$th RIS element and the receiver (resp. the transmitter).

In the following, we study the localized and distributed versions of single- and fully-connected RIS, which are the least and the most complex BD-RIS architectures, respectively.
Single-connected RIS is the conventional RIS characterized by a unitary and diagonal scattering matrix, i.e., $\boldsymbol{\Theta}=\text{diag}(e^{j\theta_1},\ldots,e^{j\theta_N})$, with $\theta_n\in[0,2\pi)$, for $n=1,\ldots,N$.
Besides, fully-connected RIS is the BD-RIS architecture with the highest flexibility as it is characterized by an arbitrary symmetric and unitary scattering matrix \cite{she20}.

For localized single-connected RIS, it is possible to optimize the phase shifts $\theta_n$ to obtain a received signal power given by
\begin{equation}
P_R^{\text{Loc-SC}}
=\left(\sum_{n=1}^{N}\left\vert\left[\mathbf{h}_{R}^{\text{Loc}}\right]_n\right\vert\left\vert\left[\mathbf{h}_{T}^{\text{Loc}}\right]_n\right\vert\right)^2.\label{eq:P-Loc-SC-random}
\end{equation}
Furthermore, the expected value of \eqref{eq:P-Loc-SC-random} can be derived by exploiting the \gls{iid} channels assumption, and that $\text{E}[\vert[\mathbf{h}_{i}^{\text{Loc}}]_{n}\vert^2]=\rho_{i}$ and $\text{E}[\vert[\mathbf{h}_{i}^{\text{Loc}}]_{n}\vert]=\frac{\sqrt{\pi}}{2}\sqrt{\rho_{i}}$, for $i\in\{R,T\}$ and $n=1,\ldots,N$, because of the moments of the chi distribution with $2$ degrees of freedom.
Thus, it is possible to show that
\begin{align}
\text{E}\left[P_R^{\text{Loc-SC}}\right]
&=\left(N+\frac{\pi^2}{16}N\left(N-1\right)\right)C_{0}^2d_{R}^{-a}d_{T}^{-a},\label{eq:P-Loc-SC}
\end{align}
giving the scaling law of localized single-connected RIS following the steps in \cite{she20}.

In the case of localized fully-connected RIS, it has been shown that it is always possible to optimize the RIS to achieve a maximum received signal power given by
\begin{equation}
P_R^{\text{Loc-FC}}
=\left\Vert\mathbf{h}_{R}^{\text{Loc}}\right\Vert^2\left\Vert\mathbf{h}_{T}^{\text{Loc}}\right\Vert^2,
\end{equation}
whose expected value writes as
\begin{align}
\text{E}\left[P_R^{\text{Loc-FC}}\right]
&=N^2C_{0}^2d_{R}^{-a}d_{T}^{-a},\label{eq:P-Loc-FC}
\end{align}
following the \gls{iid} channels assumption and the moments of the chi distribution with $2$ degrees of freedom \cite{she20}.

For distributed single-connected RIS, the achievable received signal power writes as
\begin{equation}
P_R^{\text{Dis-SC}}
=\left(\sum_{n=1}^{N}\left\vert\left[\mathbf{h}_{R}^{\text{Dis}}\right]_n\right\vert\left\vert\left[\mathbf{h}_{T}^{\text{Dis}}\right]_n\right\vert\right)^2,
\end{equation}
similarly to \eqref{eq:P-Loc-SC-random}.
By exploiting the \gls{iid} channels assumption, and that $\text{E}[\vert[\mathbf{h}_{i}^{\text{Dis}}]_{n}\vert^2]=[\boldsymbol{\rho}_{i}]_{n}$ and $\text{E}[\vert[\mathbf{h}_{i}^{\text{Dis}}]_{n}\vert]=\frac{\sqrt{\pi}}{2}\sqrt{[\boldsymbol{\rho}_{i}]_{n}}$, for $i\in\{R,T\}$ and $n=1,\ldots,N$, it is possible to show that
\begin{multline}
\text{E}\left[P_R^{\text{Dis-SC}}\right]
=C_{0}^2\left(\sum_{n=1}^{N}\left(\left[\mathbf{d}_{R}\right]_{n}\left[\mathbf{d}_{T}\right]_{n}\right)^{-a}\right.\\
\left.+\frac{\pi^2}{16}\sum_{n\neq m}\left(\left[\mathbf{d}_{R}\right]_{n}\left[\mathbf{d}_{T}\right]_{n}\left[\mathbf{d}_{R}\right]_{m}\left[\mathbf{d}_{T}\right]_{m}\right)^{-\frac{a}{2}}\right),\label{eq:P-Dis-SC}
\end{multline}
giving the scaling law of distributed single-connected RIS.

Finally, in the case of distributed fully-connected RIS, the achievable received signal power is
\begin{equation}
P_R^{\text{Dis-FC}}
=\left\Vert\mathbf{h}_{R}^{\text{Dis}}\right\Vert^2\left\Vert\mathbf{h}_{T}^{\text{Dis}}\right\Vert^2,
\end{equation}
with expected value
\begin{align}
\text{E}\left[P_R^{\text{Dis-FC}}\right]
&=C_{0}^2\sum_{n=1}^{N}\left[\mathbf{d}_{R}\right]_{n}^{-a}\sum_{n=1}^{N}\left[\mathbf{d}_{T}\right]_{n}^{-a},\label{eq:P-Dis-FC}
\end{align}
following again the \gls{iid} channels assumption and the moments of the chi distribution with $2$ degrees of freedom
\footnote{Under \gls{los} small-scale fading, i.e., $[\mathbf{h}_{i}^{\text{Loc}}]_{n}=\sqrt{\rho_i}e^{j\phi_{i,n}}$ and $[\mathbf{h}_{i}^{\text{Dis}}]_{n}=\sqrt{[\boldsymbol{\rho}_{i}]_n}e^{j\psi_{i,n}}$, for $i\in\{R,T\}$ and $n=1,\ldots,N$, it is possible to prove that $\text{E}[P_R^{\text{Loc-FC}}]$ and $\text{E}[P_R^{\text{Dis-FC}}]$ remain as in \eqref{eq:P-Loc-FC} and \eqref{eq:P-Dis-FC}, $\text{E}[P_R^{\text{Loc-SC}}]=\text{E}[P_R^{\text{Loc-FC}}]$, and $\text{E}[P_R^{\text{Dis-SC}}]=C_0^2(\sum_{n}([\mathbf{d}_{R}]_{n}[\mathbf{d}_{T}]_{n})^{-a/2})^2$.
Given the similarity between the scaling laws under Rayleigh and \gls{los} fading, the following insights on localized and distributed RIS obtained considering Rayleigh fading are also valid under more general Rician fading.}.

\begin{figure}[t]
\centering
\subfigure[]{
\includegraphics[width=0.46\textwidth]{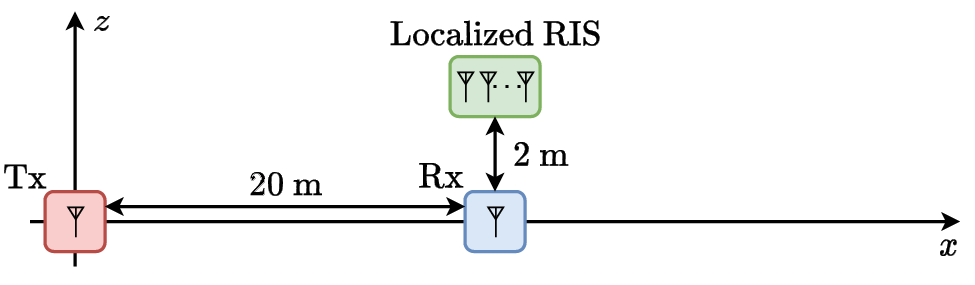}
\label{fig:deployment-loc}
}
\subfigure[]{
\includegraphics[width=0.46\textwidth]{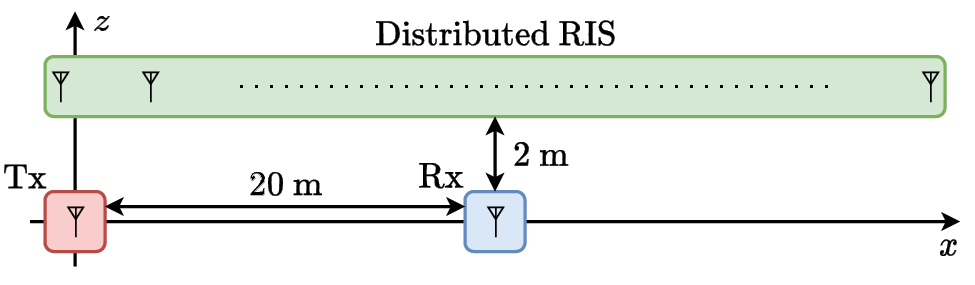}
\label{fig:deployment-dis}
}
\caption{(a) Localized and (b) distributed RIS-aided communication system.}
\label{fig:deployment}
\end{figure}

In the following, we compare the four scaling laws derived in \eqref{eq:P-Loc-SC}, \eqref{eq:P-Loc-FC}, \eqref{eq:P-Dis-SC}, and \eqref{eq:P-Dis-FC} by analyzing four gains:
\textit{i)} the gain of fully- over single-connected localized RIS,
\textit{ii)} the gain of fully- over single-connected distributed RIS,
\textit{iii)} the gain of distributed over localized single-connected RIS, and
\textit{iv)} the gain of distributed over localized fully-connected RIS.
First, we carry out a theoretical discussion valid for any values of the distances $d_R$, $d_T$, $\mathbf{d}_R$, and $\mathbf{d}_T$.
Second, to numerically quantify these gains, we consider a three-dimensional coordinate system where the transmitter and receiver are located at $(0,0,0)$ and $(20,0,0)$ in meters (m), respectively.
In the case of localized RIS, the RIS is located in $(20,0,2)$, as shown in Fig.~\ref{fig:deployment-loc}.
In the case of distributed RIS, the RIS elements are spread out over a long line.
We consider a linear architecture as it can be implemented by embedding the RIS elements and their interconnections into a cable, but the proposed distributed RIS is not limited to linear architectures.
Specifically, the RIS is a \gls{ula} with elements uniformly placed between $(0,0,2)$ and $(40,0,2)$, as shown in Fig.~\ref{fig:deployment-dis}.
To obtain more general insights, we also consider the same systems as in Figs.~\ref{fig:deployment-loc} and \ref{fig:deployment-dis}, but where the location of the receiver varies within an area of interest.

The aforementioned gains are theoretically analyzed by exploiting the concept of generalized mean in the following subsections.
The generalized mean with exponent $p\in\mathbb{R}_{*}$ of the elements of a vector $\boldsymbol{\xi}\in\mathbb{R}_{+}^{N\times1}$ is denoted as $M_p(\boldsymbol{\xi})=(\sum_{n=1}^{N}[\boldsymbol{\xi}]_{n}^{p}/N)^{1/p}$.
In particular, we make use of the following result.
\begin{proposition}
Given $N$ different positive real number collected in a vector $\boldsymbol{\xi}\in\mathbb{R}_{+}^{N\times1}$, the generalized mean with exponent $p\in\mathbb{R}_{*}$ of the elements of $\boldsymbol{\xi}$ $M_p(\boldsymbol{\xi})$ satisfy
\begin{equation}
\emph{min}\left(\boldsymbol{\xi}\right)<M_{-p}\left(\boldsymbol{\xi}\right)<\sqrt[p]{N}\emph{min}\left(\boldsymbol{\xi}\right).
\end{equation}
\label{pro:bounds}
\end{proposition}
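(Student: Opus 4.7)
The plan is to reduce both inequalities to elementary bounds on the quantity $S=\sum_{n=1}^{N}[\boldsymbol{\xi}]_{n}^{-p}$, and then apply the map $x\mapsto(x/N)^{-1/p}$ to recover $M_{-p}(\boldsymbol{\xi})$. The relevant case for the subsequent scaling-law discussion is $p>0$, so the exponent $-1/p$ is negative, which means this map is strictly decreasing. Consequently, an upper bound on $S$ will translate into a lower bound on $M_{-p}(\boldsymbol{\xi})$, and a lower bound on $S$ will translate into an upper bound on $M_{-p}(\boldsymbol{\xi})$.

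First I would set $\xi_{\min}=\min(\boldsymbol{\xi})$ and use the hypothesis that the $N$ entries of $\boldsymbol{\xi}$ are pairwise distinct. Exactly one index attains $\xi_{\min}$; the remaining $N-1$ indices satisfy $[\boldsymbol{\xi}]_{n}>\xi_{\min}$ strictly, hence $[\boldsymbol{\xi}]_{n}^{-p}<\xi_{\min}^{-p}$. Summing, $S<N\xi_{\min}^{-p}$ strictly, so $S/N<\xi_{\min}^{-p}$. Applying $(\cdot)^{-1/p}$ reverses the inequality and yields $M_{-p}(\boldsymbol{\xi})>\xi_{\min}$, which is the left-hand bound.

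For the right-hand bound, I would go in the opposite direction: keep only the minimizing index in $S$ and discard the remaining $N-1$ terms, which is legitimate because all terms $[\boldsymbol{\xi}]_{n}^{-p}$ are strictly positive. This gives $S>\xi_{\min}^{-p}$, where strictness again follows from $N\ge2$ and positivity of the discarded terms. Dividing by $N$ and raising to the power $-1/p$ (reversing direction once more), I obtain $M_{-p}(\boldsymbol{\xi})<(\xi_{\min}^{-p}/N)^{-1/p}=N^{1/p}\xi_{\min}=\sqrt[p]{N}\min(\boldsymbol{\xi})$, which is the right-hand bound.

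I do not anticipate a serious obstacle here; the argument is essentially a careful application of the monotonicity of power-type averages. The only point that requires attention is the bookkeeping of the sign of the exponent $-1/p$, since every one of the four inequality orientations in the proof flips precisely when this map is applied. The distinctness assumption on the entries of $\boldsymbol{\xi}$ is what upgrades both bounds from $\le$ to the strict inequalities stated in the proposition.
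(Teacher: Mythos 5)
Your proposal is correct and follows essentially the same route as the paper: both rewrite $M_{-p}(\boldsymbol{\xi})$ in terms of the sum $\sum_{n}1/[\boldsymbol{\xi}]_{n}^{p}$, bound that sum above by $N/\min(\boldsymbol{\xi})^{p}$ (using distinctness for strictness) and below by the single term $1/\min(\boldsymbol{\xi})^{p}$, and then invert via the decreasing map to obtain the two bounds. The only difference is presentational bookkeeping of the exponent sign, which the paper handles implicitly.
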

\begin{proof}
To prove the two bounds, we rewrite $M_{-p}\left(\boldsymbol{\xi}\right)$ as
\begin{equation}
M_{-p}\left(\boldsymbol{\xi}\right)=\sqrt[p]{\frac{N}{\sum_{n=1}^N\frac{1}{\left[\boldsymbol{\xi}\right]_{n}^p}}}.
\end{equation}
First, since $\sum_{n=1}^N\frac{1}{\left[\boldsymbol{\xi}\right]_{n}^p}<\frac{N}{\text{min}\left(\boldsymbol{\xi}\right)^p}$, we have $M_{-p}\left(\boldsymbol{\xi}\right)>\text{min}\left(\boldsymbol{\xi}\right)$, with equality achieved if and only if all entries of $\boldsymbol{\xi}$ are equal.
Second, since $\sum_{n=1}^N\frac{1}{\left[\boldsymbol{\xi}\right]_{n}^p}>\frac{1}{\text{min}\left(\boldsymbol{\xi}\right)^p}$, we have $M_{-p}\left(\boldsymbol{\xi}\right)<\sqrt[p]{N}\text{min}\left(\boldsymbol{\xi}\right)$.
\end{proof}

\subsection{Gain of Fully- over Single-Connected Localized RIS}
\label{sec:G-Loc}

The gain of fully-connected over single-connected localized RIS, defined as $\mathcal{G}^{\text{Loc}}=\text{E}[P_R^{\text{Loc-FC}}]/\text{E}[P_R^{\text{Loc-SC}}]$, writes as
\begin{equation}
\mathcal{G}^{\text{Loc}}
=\frac{N}{1+\frac{\pi^2}{16}\left(N-1\right)},
\end{equation}
following \eqref{eq:P-Loc-SC} and \eqref{eq:P-Loc-FC}, in agreement with \cite{she20}.
Remarkably, we observe that $\mathcal{G}^{\text{Loc}}$, which depends only on $N$, is lower bounded by $\mathcal{G}^{\text{Loc}}\geq1$ and upper bounded by $\mathcal{G}^{\text{Loc}}<\frac{16}{\pi^2}\approx1.62$.
Thus, localized fully-connected RIS is beneficial over localized single-connected RIS, enabling a gain of at most $62\%$ in a \gls{siso} system \cite{she20}.

\subsection{Gain of Fully- over Single-Connected Distributed RIS}

We define the gain of fully-connected over single-connected distributed RIS as $\mathcal{G}^{\text{Dis}}=\text{E}[P_R^{\text{Dis-FC}}]/\text{E}[P_R^{\text{Dis-SC}}]$.
By noticing that \eqref{eq:P-Dis-SC} is upper bounded by
\begin{align}
\text{E}\left[P_R^{\text{Dis-SC}}\right]
&<C_{0}^2\left(\sum_{n=1}^{N}\left(\left[\mathbf{d}_{R}\right]_{n}\left[\mathbf{d}_{T}\right]_{n}\right)^{-\frac{a}{2}}\right)^2,\label{eq:P-Dis-SC-UB}
\end{align}
we obtain that
\begin{align}
\mathcal{G}^{\text{Dis}}
&>\frac{\sum_{n=1}^{N}\left[\mathbf{d}_{R}\right]_{n}^{-a}
\sum_{n=1}^{N}\left[\mathbf{d}_{T}\right]_{n}^{-a}}{\left(\sum_{n=1}^{N}\left(\left[\mathbf{d}_{R}\right]_{n}\left[\mathbf{d}_{T}\right]_{n}\right)^{-\frac{a}{2}}\right)^2}\\
&=\left(\frac{M_{-\frac{a}{2}}\left(\mathbf{d}_{R}\odot\mathbf{d}_{T}\right)}{M_{-a}\left(\mathbf{d}_{R}\right)M_{-a}\left(\mathbf{d}_{T}\right)}\right)^{a},
\end{align}
where $M_p(\boldsymbol{\xi})=(\sum_{n=1}^{N}[\boldsymbol{\xi}]_{n}^{p}/N)^{1/p}$ denotes the generalized mean of the elements of $\boldsymbol{\xi}\in\mathbb{R}_{+}^{N\times1}$ with exponent $p\in\mathbb{R}_{*}$, and $\odot$ denotes the Hadamard product.
Thus, we have $\mathcal{G}^{\text{Dis}}\geq 1$ from the Cauchy-Schwarz inequality.
Furthermore, since $\text{min}(\boldsymbol{\xi})<M_{-p}(\boldsymbol{\xi})<\sqrt[p]{N}\text{min}(\boldsymbol{\xi})$ following Proposition~\ref{pro:bounds}, we have
\begin{align}
\mathcal{G}^{\text{Dis}}
&>\left(\frac{\text{min}\left(\mathbf{d}_{R}\odot\mathbf{d}_{T}\right)}{\sqrt[a]{N^2}\text{min}\left(\mathbf{d}_{R}\right)\text{min}\left(\mathbf{d}_{T}\right)}\right)^{a},\label{eq:G-Dis-LB}
\end{align}
indicating that $\mathcal{G}^{\text{Dis}}$ increases exponentially with $a$ and can be significantly high when both $\text{min}\left(\mathbf{d}_{R}\right)$ and $\text{min}\left(\mathbf{d}_{T}\right)$ are reduced, i.e., there is at least an RIS element close to the receiver and another one close to the transmitter.


We numerically quantify $\mathcal{G}^{\text{Dis}}$ in Fig.~\ref{fig:G1-Dis}, where we consider the system represented in Fig.~\ref{fig:deployment-dis} with different values of path-loss exponent $a$ and number of RIS elements $N$.
From Fig.~\ref{fig:G1-Dis}, we observe that $\mathcal{G}^{\text{Dis}}$ can be as high as several orders of magnitude, differently from $\mathcal{G}^{\text{Loc}}$, which is limited to $1.62$.
Fig.~\ref{fig:G1-Dis} confirms that $\mathcal{G}^{\text{Dis}}$ grows exponentially with $a$, and shows that the gain is only slightly affected by $N$\footnote{With $a=4$ and $N=64$, the received signal power offered by a distributed fully-connected RIS is given by $P_R^{\text{Dis-FC}}
=\Vert\mathbf{h}_{R}^{\text{Dis}}\Vert^2\Vert\mathbf{h}_{T}^{\text{Dis}}\Vert^2
=-72$~dBW.
Note that a even higher performance can be achieved by a distributed \gls{mimo} transmitter having $N$ antennas in the locations of the $N$ RIS elements.
In this case, \gls{mrt} would give a received signal power of $P_R^{\text{Dis-MIMO}}
=\Vert\mathbf{h}_{R}^{\text{Dis}}\Vert^2
=-35$~dBW.
However, this improvement would require a much-increased hardware cost and power consumption, since a \gls{rf} chain per element would be needed.}.

Besides, in Fig.~\ref{fig:G2-Dis}, we report the gain $\mathcal{G}^{\text{Dis}}$ for different locations of the receiver, located in $(x,y,0)$, where $x\in[-10,70]$ and $y\in[-40,40]$, with $a=4$ and $N=64$.
From Fig.~\ref{fig:G2-Dis}, we observe that $\mathcal{G}^{\text{Dis}}$ is particularly high when the receiver is close to the distributed RIS and at the same time far from the transmitter.
As visible from $\eqref{eq:G-Dis-LB}$, $\mathcal{G}^{\text{Dis}}$ assumes high values when both the transmitter and the receiver are close to the distributed RIS.
Thus, the high gains in Fig.~\ref{fig:G2-Dis} can be observed also for different locations of the transmitter, as long as it is located in proximity to an RIS element.
Conversely, $\mathcal{G}^{\text{Dis}}$ decreases, approaching $1$, when both the transmitter and the receiver are located far from all the RIS elements.

\begin{figure*}[t]
\centering
\subfigure[Gain of fully- over single-connected distributed RIS $\mathcal{G}^{\text{Dis}}$.]{
\includegraphics[width=0.29\textwidth]{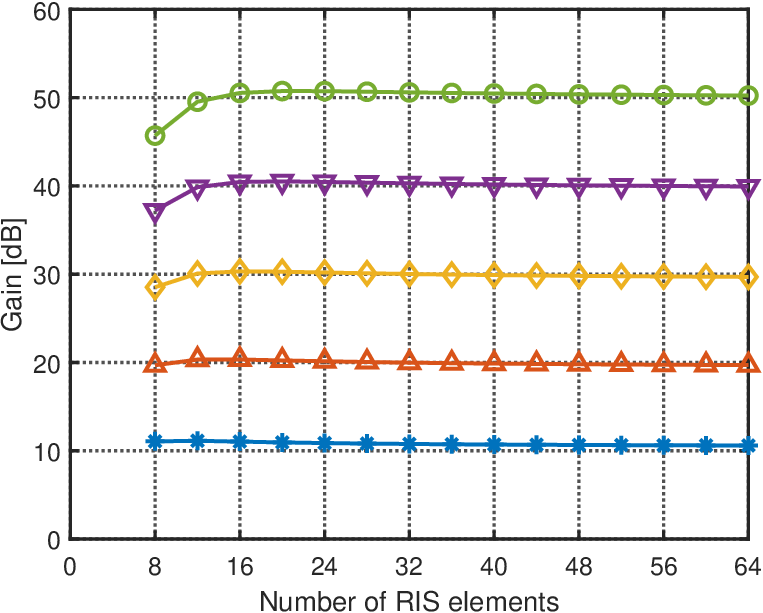}
\label{fig:G1-Dis}
}
\subfigure[Gain of distributed over localized single-connected RIS $\mathcal{G}^{\text{SC}}$.]{
\includegraphics[width=0.29\textwidth]{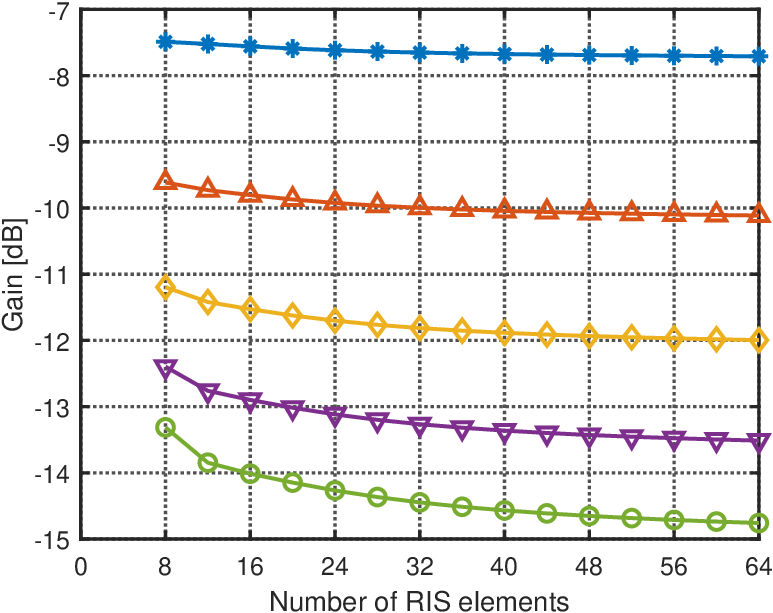}
\label{fig:G1-SC}
}
\subfigure[Gain of distributed over localized fully-connected RIS $\mathcal{G}^{\text{FC}}$.]{
\includegraphics[width=0.29\textwidth]{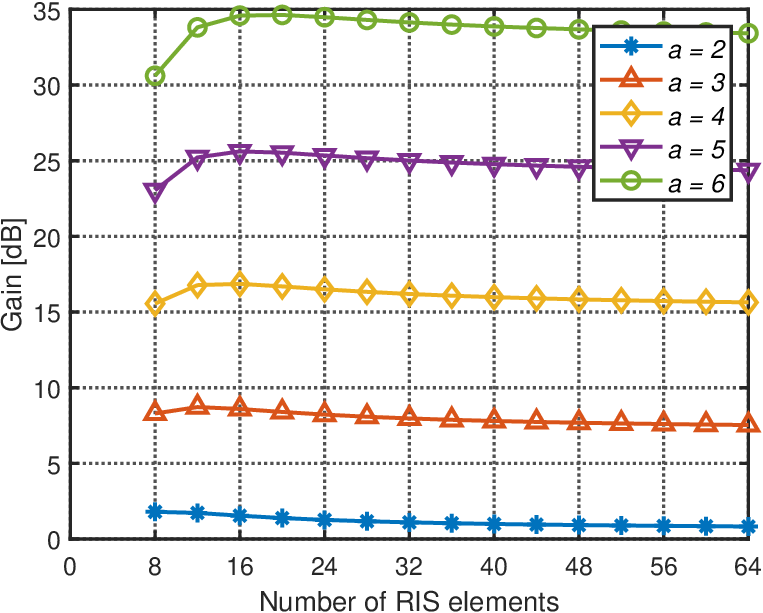}
\label{fig:G1-FC}
}
\caption{(a) $\mathcal{G}^{\text{Dis}}$, (b) $\mathcal{G}^{\text{SC}}$, and (c) $\mathcal{G}^{\text{FC}}$ for different values of path-loss exponent and number of RIS elements.}
\label{fig:G1}
\end{figure*}

\begin{figure*}[t]
\centering
\subfigure[Gain of fully- over single-connected distributed RIS $\mathcal{G}^{\text{Dis}}$.]{
\includegraphics[width=0.29\textwidth]{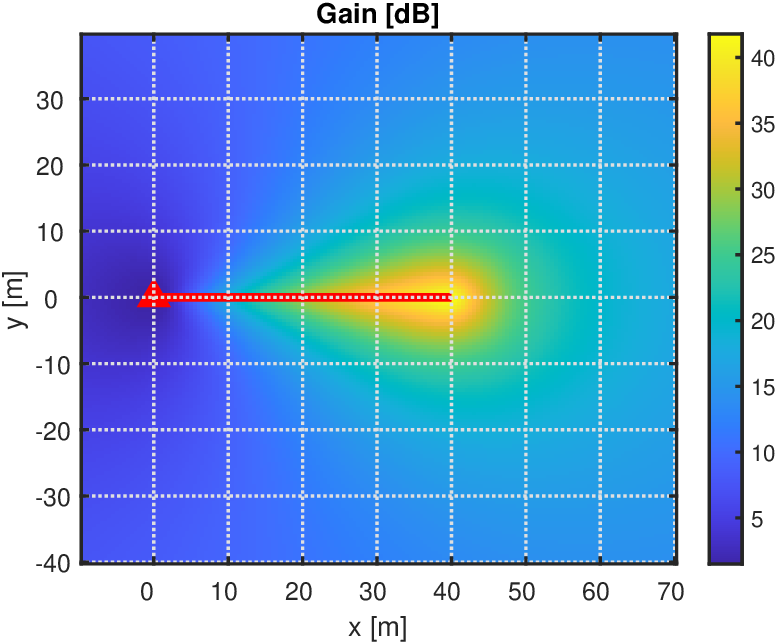}
\label{fig:G2-Dis}
}
\subfigure[Gain of distributed over localized single-connected RIS $\mathcal{G}^{\text{SC}}$.]{
\includegraphics[width=0.29\textwidth]{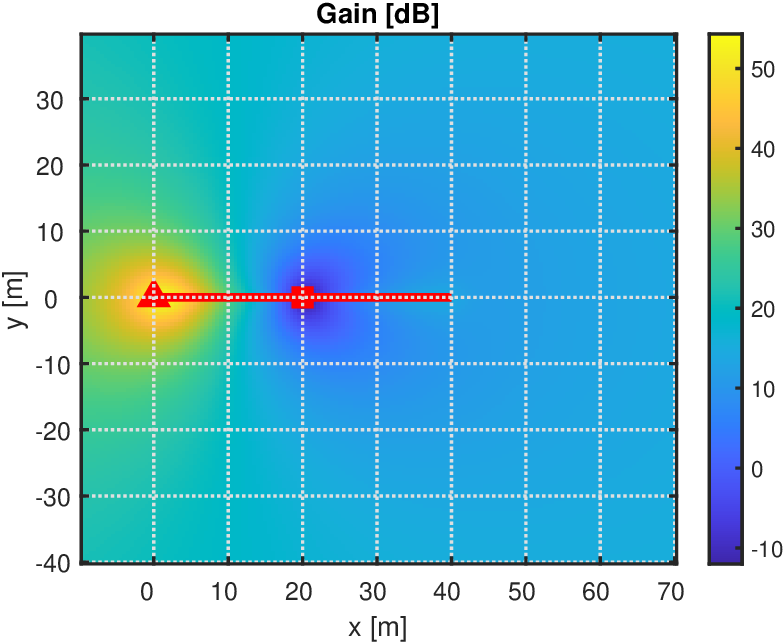}
\label{fig:G2-SC}
}
\subfigure[Gain of distributed over localized fully-connected RIS $\mathcal{G}^{\text{FC}}$.]{
\includegraphics[width=0.29\textwidth]{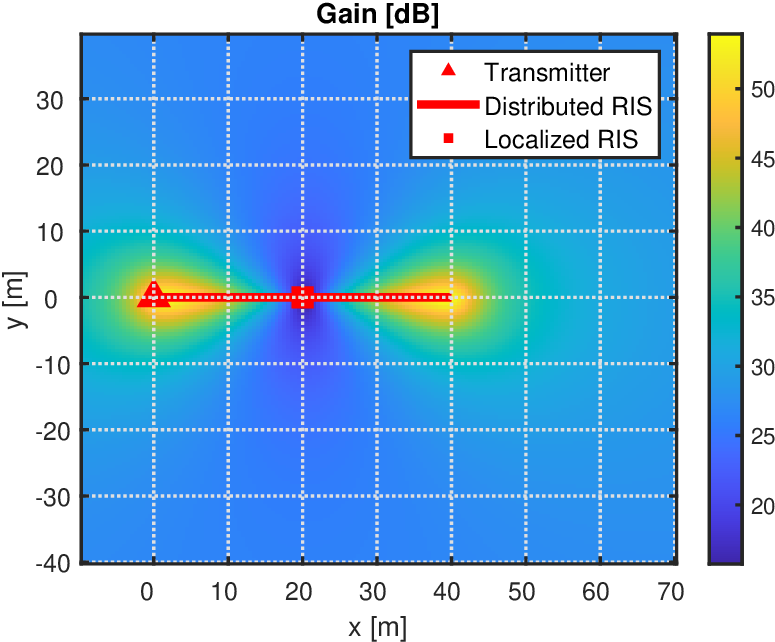}
\label{fig:G2-FC}
}
\caption{(a) $\mathcal{G}^{\text{Dis}}$, (b) $\mathcal{G}^{\text{SC}}$, and (c) $\mathcal{G}^{\text{FC}}$ for different locations of the receiver.}
\label{fig:G2}
\end{figure*}

\subsection{Gain of Distributed over Localized Single-Connected RIS}

Consider now the gain of distributed over localized single-connected RIS, defined as $\mathcal{G}^{\text{SC}}=\text{E}[P_R^{\text{Dis-SC}}]/\text{E}[P_R^{\text{Loc-SC}}]$.
To gain useful insights into $\mathcal{G}^{\text{SC}}$, we approximate this gain by deriving an upper and a lower bound.
First, by noticing that \eqref{eq:P-Dis-SC} is lower bounded by
\begin{align}
\text{E}\left[P_R^{\text{Dis-SC}}\right]
&>C_{0}^2\frac{\pi^2}{16}\left(\sum_{n=1}^{N}\left(\left[\mathbf{d}_{R}\right]_{n}\left[\mathbf{d}_{T}\right]_{n}\right)^{-\frac{a}{2}}\right)^2,
\end{align}
and that \eqref{eq:P-Loc-SC} is upper bounded by
\begin{equation}
\text{E}\left[P_R^{\text{Loc-SC}}\right]
<N^2C_{0}^2d_{R}^{-a}d_{T}^{-a},
\end{equation}
we obtain that
\begin{align}
\mathcal{G}^{\text{SC}}
&>\frac{\frac{\pi^2}{16}\left(\sum_{n=1}^{N}\left(\left[\mathbf{d}_{R}\right]_{n}\left[\mathbf{d}_{T}\right]_{n}\right)^{-\frac{a}{2}}\right)^2}{N^2d_{R}^{-a}d_{T}^{-a}}\\
&=\frac{\pi^2}{16}\left(\frac{d_{R}d_{T}}{M_{-\frac{a}{2}}\left(\mathbf{d}_{R}\odot\mathbf{d}_{T}\right)}\right)^{a}.\label{eq:G-SC-LB}
\end{align}
Second, by recalling \eqref{eq:P-Dis-SC-UB} and noticing that \eqref{eq:P-Loc-SC} is lower bounded by
\begin{equation}
\text{E}\left[P_R^{\text{Loc-SC}}\right]
>\frac{\pi^2}{16}N^2C_{0}^2d_{R}^{-a}d_{T}^{-a},
\end{equation}
we have
\begin{align}
\mathcal{G}^{\text{SC}}
&<\frac{\left(\sum_{n=1}^{N}\left(\left[\mathbf{d}_{R}\right]_{n}\left[\mathbf{d}_{T}\right]_{n}\right)^{-\frac{a}{2}}\right)^2}{\frac{\pi^2}{16}N^2d_{R}^{-a}d_{T}^{-a}}\\
&=\frac{16}{\pi^2}\left(\frac{d_{R}d_{T}}{M_{-\frac{a}{2}}\left(\mathbf{d}_{R}\odot\mathbf{d}_{T}\right)}\right)^{a}.\label{eq:G-SC-UB}
\end{align}
Thus, considering \eqref{eq:G-SC-LB} and \eqref{eq:G-SC-UB}, and noticing that $\pi^2/16 = 0.6169$, we can approximate $\mathcal{G}^{\text{SC}}$ as
\begin{equation}
\mathcal{G}^{\text{SC}}
\approx\left(\frac{d_{R}d_{T}}{M_{-\frac{a}{2}}\left(\mathbf{d}_{R}\odot\mathbf{d}_{T}\right)}\right)^{a}
\triangleq\tilde{\mathcal{G}}^{\text{SC}},\label{eq:G-SC-approx}
\end{equation}
which can be greater or less than one, depending on the positions of the localized and distributed RISs.

Since distributed single-connected RIS is not always beneficial over localized single-connected RIS, we provide intuitive sufficient and necessary conditions for $\tilde{\mathcal{G}}^{\text{SC}}>1$.
First, since $M_{-\frac{a}{2}}(\mathbf{d}_{R}\odot\mathbf{d}_{T})<\sqrt[a]{N^2}\text{min}\left(\mathbf{d}_{R}\odot\mathbf{d}_{T}\right)$ following Proposition~\ref{pro:bounds}, we have
\begin{equation}
\tilde{\mathcal{G}}^{\text{SC}}
>\left(\frac{d_{R}d_{T}}{\sqrt[a]{N^2}\text{min}\left(\mathbf{d}_{R}\odot\mathbf{d}_{T}\right)}\right)^{a},
\end{equation}
and a sufficient condition for $\tilde{\mathcal{G}}^{\text{SC}}>1$ is given by $d_{R}d_{T}>\sqrt[a]{N^2}\text{min}\left(\mathbf{d}_{R}\odot\mathbf{d}_{T}\right)$.
Second, since $M_{-\frac{a}{2}}(\mathbf{d}_{R}\odot\mathbf{d}_{T})>\text{min}(\mathbf{d}_{R}\odot\mathbf{d}_{T})$ following Proposition~\ref{pro:bounds}, we have
\begin{equation}
\tilde{\mathcal{G}}^{\text{SC}}
<\left(\frac{d_{R}d_{T}}{\text{min}\left(\mathbf{d}_{R}\odot\mathbf{d}_{T}\right)}\right)^{a},
\end{equation}
and a necessary condition for $\tilde{\mathcal{G}}^{\text{SC}}>1$ is given by $d_{R}d_{T}>\text{min}(\mathbf{d}_{R}\odot\mathbf{d}_{T})$.

In Fig.~\ref{fig:G1-SC}, we numerically evaluate $\mathcal{G}^{\text{SC}}$ considering the systems represented in Fig.~\ref{fig:deployment}.
Interestingly, we always have $\mathcal{G}^{\text{SC}}<1$, meaning that the distributed single-connected RIS is not beneficial over the localized single-connected RIS in the considered scenario.
This is because the receiver is located close to the localized RIS, yielding a particularly small $d_{R}d_{T}$, which deteriorates the gain, as given in \eqref{eq:G-SC-approx}.

In addition, in Fig.~\ref{fig:G2-SC}, we report $\mathcal{G}^{\text{SC}}$ for different locations of the receiver.
Remarkably, we observe that $\mathcal{G}^{\text{SC}}<1$ only when the receiver is in close proximity to the localized RIS, while it can reach high values when the receiver is far from the localized RIS.
Thus, distributed single-connected RIS offers better coverage than localized single-connected RIS, improving the received signal power over a much wider area.

\subsection{Gain of Distributed over Localized Fully-Connected RIS}

Finally, we consider the gain of distributed over localized fully-connected RIS, defined as $\mathcal{G}^{\text{FC}}=\text{E}[P_R^{\text{Dis-FC}}]/\text{E}[P_R^{\text{Loc-FC}}]$.
From \eqref{eq:P-Loc-FC} and \eqref{eq:P-Dis-FC}, we directly obtain
\begin{align}
\mathcal{G}^{\text{FC}}
&=\frac{\sum_{n=1}^{N}\left[\mathbf{d}_{R}\right]_{n}^{-a}\sum_{n=1}^{N}\left[\mathbf{d}_{T}\right]_{n}^{-a}}{N^2d_{R}^{-a}d_{T}^{-a}}\\
&=\left(\frac{d_{R}d_{T}}{M_{-a}\left(\mathbf{d}_{R}\right)M_{-a}\left(\mathbf{d}_{T}\right)}\right)^{a},
\end{align}
indicating that $\mathcal{G}^{\text{FC}}\geq\mathcal{G}^{\text{SC}}$ by the Cauchy-Schwarz inequality.
Since
\begin{equation}
\mathcal{G}^{\text{FC}}
>\left(\frac{d_{R}d_{T}}{\sqrt[a]{N^2}\text{min}\left(\mathbf{d}_{R}\right)\text{min}\left(\mathbf{d}_{T}\right)}\right)^{a},
\end{equation}
because of Proposition~\ref{pro:bounds}, a sufficient condition for $\mathcal{G}^{\text{FC}}>1$ is $d_{R}d_{T}>\sqrt[a]{N^2}\text{min}\left(\mathbf{d}_{R}\right)\text{min}\left(\mathbf{d}_{T}\right)$, which is more relaxed than the sufficient condition for $\tilde{\mathcal{G}}^{\text{SC}}>1$.
In addition, since
\begin{equation}
\mathcal{G}^{\text{FC}}
<\left(\frac{d_{R}d_{T}}{\text{min}\left(\mathbf{d}_{R}\right)\text{min}\left(\mathbf{d}_{T}\right)}\right)^{a},
\end{equation}
because of Proposition~\ref{pro:bounds}, a necessary condition for $\mathcal{G}^{\text{FC}}>1$ is $d_{R}d_{T}>\text{min}\left(\mathbf{d}_{R}\right)\text{min}\left(\mathbf{d}_{T}\right)$, which is more relaxed than the necessary condition for $\tilde{\mathcal{G}}^{\text{SC}}>1$.

In Fig.~\ref{fig:G1-FC}, we numerically evaluate $\mathcal{G}^{\text{FC}}$ considering the systems represented in Fig.~\ref{fig:deployment}.
Remarkably, we always observe $\mathcal{G}^{\text{FC}}>1$, meaning that the distributed fully-connected RIS always outperforms localized fully-connected RIS in the considered scenario.
Furthermore, Fig.~\ref{fig:G1-FC} shows that $\mathcal{G}^{\text{FC}}$ grows exponentially with $a$, reaching values as high as several orders of magnitude, and is only minimally impacted by $N$.

Besides, we report $\mathcal{G}^{\text{FC}}$ for different locations of the receiver in Fig.~\ref{fig:G2-FC}.
Remarkably, we observe that $\mathcal{G}^{\text{FC}}>15$~dB for all receiver locations, indicating that distributed fully-connected RIS offers better performance than localized fully-connected RIS over the whole considered area.
In particular, $\mathcal{G}^{\text{FC}}$ is significantly high when the receiver is far from the localized RIS while being close to the distributed RIS.

\section{General Lossy BD-RIS Model}
\label{sec:gen-model}

We have introduced the concept of distributed RIS, showing that distributed BD-RIS can enable significant gains over distributed single-connected RIS and localized BD-RIS.
In this section, we study the impact of the losses within the BD-RIS architecture by developing a model for BD-RIS that accounts for lossy interconnections.

We model a BD-RIS with lossy interconnections as an $N$-port network with a circuit topology described as follows.
First, port $m$ is connected to ground through a tunable impedance $Z_m$, for $m=1,\ldots,N$, as represented in Fig.~\ref{fig:model-RIS}(a).
Second, port $m$ and port $n$, if interconnected in the BD-RIS architecture, are interconnected through a tunable impedance $Z_{n,m}$ in series with a transmission line of length $\ell_{n,m}$, for $n,m=1,\ldots,N$, as represented in Fig.~\ref{fig:model-RIS}(b).

\begin{figure}[t]
\centering
\includegraphics[width=0.44\textwidth]{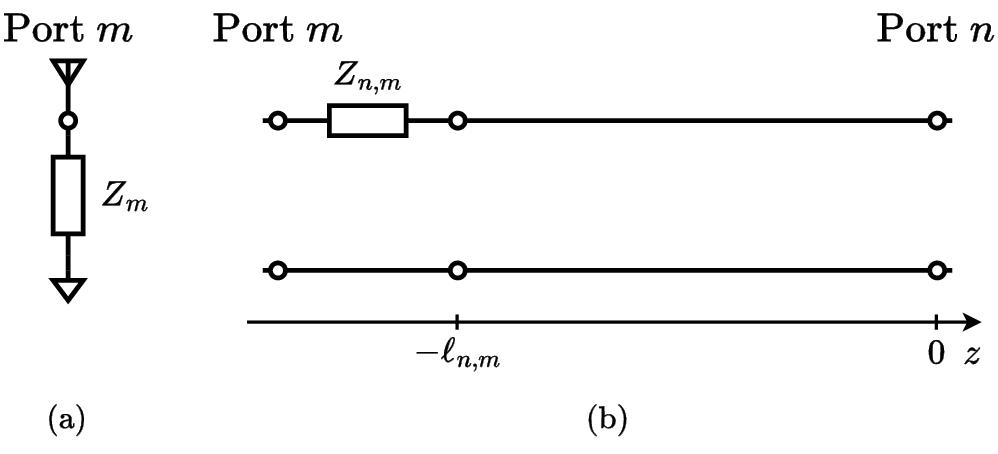}
\caption{(a) Port $m$ connected to ground through $Z_m$ and (b) ports $m$ and $n$ interconnected through $Z_{n,m}$ and a transmission line of length $\ell_{n,m}$.}
\label{fig:model-RIS}
\end{figure}

Given an $N$-port network implemented with this circuit topology, our goal is to characterize its scattering matrix $\boldsymbol{\Theta}$.
Since it is hard to directly derive $\boldsymbol{\Theta}$ from the BD-RIS circuit topology, we characterize the BD-RIS admittance matrix $\mathbf{Y}\in\mathbb{C}^{N\times N}$, which is related to $\boldsymbol{\Theta}$ by 
\begin{equation}
\boldsymbol{\Theta}=\left(\mathbf{I}+Z_{0}\mathbf{Y}\right)^{-1}\left(\mathbf{I}-Z_{0}\mathbf{Y}\right),\label{eq:T(Y)}
\end{equation}
where $Z_{0}$ is the reference impedance, typically set to $Z_0=50$~$\Omega$ \cite[Chapter~4]{poz11}.
According to the definition of $\mathbf{Y}$, each entry $[\mathbf{Y}]_{n,m}$ is obtained by driving port $m$ with a voltage $V_m$, short-circuiting all other ports, measuring the short-circuit current $I_n$ at port $n$, and computing the ratio
\begin{equation}
[\mathbf{Y}]_{n,m}=\left.\frac{I_n}{V_m}\right\vert_{V_k=0,\forall k\neq m},\label{eq:Ynm-def}
\end{equation}
for $n,m=1,\ldots,N$ \cite[Chapter~4]{poz11}.
In the following, we separately model the off-diagonal entries $[\mathbf{Y}]_{n,m}$, with $n\neq m$, and the diagonal entries $[\mathbf{Y}]_{m,m}$.

\subsection{Modeling the Admittance Matrix: Off-Diagonal Entries}

To derive the off-diagonal entries $[\mathbf{Y}]_{n,m}$, with $n\neq m$, given by \eqref{eq:Ynm-def}, we exploit the fact that according to the definition of $[\mathbf{Y}]_{n,m}$ in \eqref{eq:Ynm-def}, all the ports $k$ are short-circuited, with $k\neq m$.
This allows us to study a simplified circuit, without making any additional assumptions.
Specifically, we make two observations following the definition in \eqref{eq:Ynm-def}.
First, the current flowing on the transmission lines interconnecting port $j$ and port $k$, with $j,k\neq m$, is zero since these transmission lines are short-circuited at both ends according to \eqref{eq:Ynm-def}.
Consequently, these transmission lines can be removed from the circuit when computing \eqref{eq:Ynm-def}.
Second, the current flowing on the transmission lines interconnecting port $m$ and port $k$, with $k\neq n$, do not influence the current $I_n$.
Thus, also these transmission lines can be removed from the circuit as they do not impact on \eqref{eq:Ynm-def}.
Following these two observations, we have that each off-diagonal entry $[\mathbf{Y}]_{n,m}$, with $n\neq m$, can be derived by studying an equivalent circuit made of a single transmission line, as represented in Fig.~\ref{fig:model-mn}.

\begin{figure}[t]
\centering
\subfigure[]{
\includegraphics[width=0.40\textwidth]{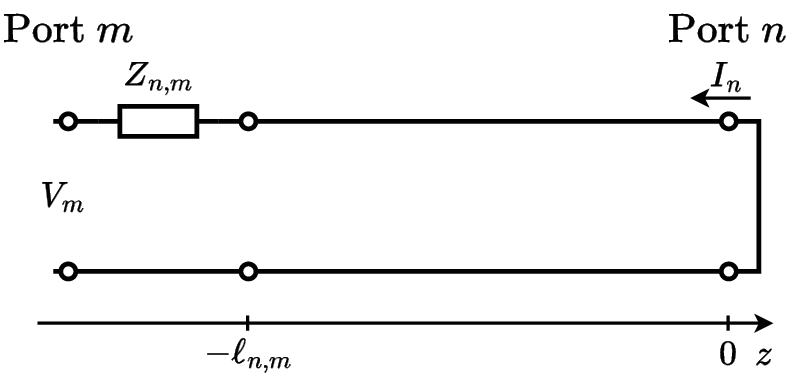}
\label{fig:model-mn}
}
\subfigure[]{
\includegraphics[width=0.44\textwidth]{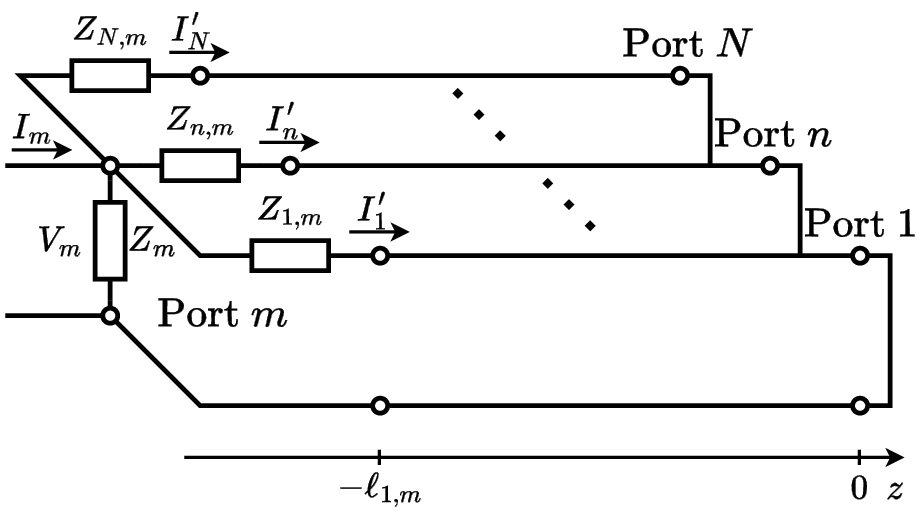}
\label{fig:model-m}
}
\caption{Equivalent circuit to be studied to compute (a) the off-diagonal entry $[\mathbf{Y}]_{n,m}$ and (b) the diagonal entry $[\mathbf{Y}]_{m,m}$. }
\label{fig:model}
\end{figure}

To study the circuit in Fig.~\ref{fig:model-mn} and determine $[\mathbf{Y}]_{n,m}$, we begin by solving the telegrapher equations in the sinusoidal steady-state condition \cite[Chapter~2]{poz11}, giving the voltage $V(z)$ and current $I(z)$ on the transmission line interconnecting port $m$ and port $n$ as
\begin{equation}
V(z)=V_0^+e^{-\gamma z}+V_0^-e^{\gamma z},\label{eq:V}
\end{equation}
\begin{equation}
I(z)=\frac{V_0^+e^{-\gamma z}-V_0^-e^{\gamma z}}{Z_0},\label{eq:I}
\end{equation}
respectively, where $V_0^+\in\mathbb{C}$ and $V_0^-\in\mathbb{C}$ are constants, $\gamma\in\mathbb{C}$ is the propagation constant of the transmission line, and $Z_0\in\mathbb{R}$ is the characteristic impedance of the transmission line\footnote{Note that $Z_0$ is real for both lossless and the so-called low-loss transmission lines \cite[Chapter~2]{poz11}, which are considered in this study.}.
In the following, we characterize the voltage and current on the transmission line by determining the constants $V_0^+$ and $V_0^-$.

Since the transmission line in Fig.~\ref{fig:model-mn} is terminated in a short circuit, the voltage at $z=0$ is zero.
Thus, by using \eqref{eq:V}, we have $V(0)=V_0^++V_0^-=0$, giving $V_0^-=-V_0^+$.
By substituting $V_0^-=-V_0^+$ into \eqref{eq:V} and \eqref{eq:I}, $V(z)$ and $I(z)$ simplify to
\begin{equation}
V(z)=V_0^+\left(e^{-\gamma z}-e^{\gamma z}\right),\label{eq:V-load}
\end{equation}
\begin{equation}
I(z)=\frac{V_0^+}{Z_0}\left(e^{-\gamma z}+e^{\gamma z}\right),\label{eq:I-load}
\end{equation}
respectively.
Furthermore, we can determine $V_0^+$ by using the input impedance of the transmission line, defined as
\begin{equation}
Z_{\text{in}}=\frac{V\left(-\ell_{n,m}\right)}{I\left(-\ell_{n,m}\right)},\label{eq:Zin1}
\end{equation}
and that can be expressed as
\begin{equation}
Z_{\text{in}}=Z_0\frac{e^{\gamma\ell_{n,m}}-e^{-\gamma\ell_{n,m}}}{e^{\gamma\ell_{n,m}}+e^{-\gamma\ell_{n,m}}},\label{eq:Zin2}
\end{equation}
by using \eqref{eq:V-load} and \eqref{eq:I-load}.
Specifically, we determine $V_0^+$ from the voltage at $z=-\ell_{n,m}$, given by
\begin{align}
V(-\ell_{n,m})
&=V_m\frac{Z_{\text{in}}}{Z_{\text{in}}+Z_{n,m}}\label{eq:Vlnm1}\\
&=V_0^+\left(e^{\gamma\ell_{n,m}}-e^{-\gamma\ell_{n,m}}\right),\label{eq:Vlnm2}
\end{align}
where \eqref{eq:Vlnm1} is a direct consequence of \eqref{eq:Zin1}, and \eqref{eq:Vlnm2} is obtained by expressing $V(-\ell_{n,m})$ through \eqref{eq:V-load}.
By equating \eqref{eq:Vlnm1} and \eqref{eq:Vlnm2}, we obtain the expression of $V_0^+$ as
\begin{align}
V_0^+
&=V_m\frac{Z_{\text{in}}}{\left(Z_{\text{in}}+Z_{n,m}\right)\left(e^{\gamma\ell_{n,m}}-e^{-\gamma\ell_{n,m}}\right)}\\
&=V_m\frac{Z_0}{Z_{n,m}\left(e^{\gamma\ell_{n,m}}+e^{-\gamma\ell_{n,m}}\right)+Z_0\left(e^{\gamma\ell_{n,m}}-e^{-\gamma\ell_{n,m}}\right)},\label{eq:V0+}
\end{align}
where \eqref{eq:V0+} is derived by using \eqref{eq:Zin2}.

We can now derive the off-diagonal entry $[\mathbf{Y}]_{n,m}$, with $n\neq m$, given by \eqref{eq:Ynm-def}.
Specifically, $I_n$ is obtained through \eqref{eq:I-load} as
\begin{equation}
I_n=-I(0)=-2\frac{V_0^+}{Z_0},\label{eq:In}
\end{equation}
with $V_0^+$ given by \eqref{eq:V0+}.
Thus, by substituting \eqref{eq:In} into \eqref{eq:Ynm-def}, we finally obtain
\begin{equation}
[\mathbf{Y}]_{n,m}=\frac{-2}{Z_{n,m}\left(e^{\gamma\ell_{n,m}}+e^{-\gamma\ell_{n,m}}\right)+Z_0\left(e^{\gamma\ell_{n,m}}-e^{-\gamma\ell_{n,m}}\right)}.\label{eq:Ynm}
\end{equation}
Remarkably, \eqref{eq:Ynm} provides the general expression of the off-diagonal entry $[\mathbf{Y}]_{n,m}$ as a function of the tunable impedance $Z_{n,m}$ and the transmission line parameters $\ell_{n,m}$, $\gamma$, and $Z_0$.
Because of the reciprocity of the circuit, $\mathbf{Y}$ is symmetric and we have $[\mathbf{Y}]_{n,m}=[\mathbf{Y}]_{m,n}$\footnote{We consider the BD-RIS circuit to be reciprocal since it is more practical to implement, as in previous works \cite{she20,ner23-1}.}.
Furthermore, if port $m$ and $n$ are not interconnected in the BD-RIS architecture, $Z_{n,m}$ is an open circuit, i.e., $Z_{n,m}=\infty$, yielding $[\mathbf{Y}]_{n,m}=[\mathbf{Y}]_{m,n}=0$.

\subsection{Modeling the Admittance Matrix: Diagonal Entries}

To derive the diagonal entries $[\mathbf{Y}]_{m,m}$, we observe that the current flowing on the transmission lines interconnecting port $j$ and port $k$, with $j,k\neq m$, is zero since these transmission lines are short-circuited at both ends according to \eqref{eq:Ynm-def}.
Thus, these transmission lines do not impact on \eqref{eq:Ynm-def} and can be removed from the circuit.
Following this observation, we have that each diagonal entry $[\mathbf{Y}]_{m,m}$ can be derived by studying the equivalent circuit represented in Fig.~\ref{fig:model-m} to find the ratio $[\mathbf{Y}]_{m,m}=I_m/V_m$.

From the circuit in Fig.~\ref{fig:model-m}, we directly notice by Kirchhoff's first law that
\begin{equation}
I_m=\frac{V_m}{Z_{m}}+\sum_{n\neq m}I_n^\prime,\label{eq:Im}
\end{equation}
where $I_n^\prime$ can be computed through \eqref{eq:I-load} as
\begin{equation}
I_n^\prime=I(-\ell_{n,m})=\frac{V_0^+}{Z_0}\left(e^{\gamma\ell_{n,m}}+e^{-\gamma\ell_{n,m}}\right),\label{eq:In-prime}
\end{equation}
$\forall n\neq m$, with $V_0^+$ given by \eqref{eq:V0+}.
Thus, by substituting \eqref{eq:Im} into \eqref{eq:Ynm-def}, we obtain
\begin{equation}
[\mathbf{Y}]_{m,m}=\frac{1}{Z_{m}}+\sum_{n\neq m}\frac{I_n^\prime}{V_m},\label{eq:Ym1}
\end{equation}
readily giving
\begin{equation}
[\mathbf{Y}]_{m,m}=\frac{1}{Z_{m}}-\sum_{n\neq m}\frac{e^{\gamma\ell_{n,m}}+e^{-\gamma\ell_{n,m}}}{2}[\mathbf{Y}]_{n,m},\label{eq:Ym}
\end{equation}
by using \eqref{eq:In-prime}, where $[\mathbf{Y}]_{n,m}$ is given by \eqref{eq:Ynm}.
Remarkably, the diagonal entry $[\mathbf{Y}]_{m,m}$ is a function of $Z_m$, $Z_{n,m}$, $\forall n\neq m$, and the transmission line parameters.

\section{Simplified Lossy BD-RIS Models}
\label{sec:sim-model}

We have modeled the RIS admittance matrix $\mathbf{Y}$ by determining its entries in \eqref{eq:Ynm} and \eqref{eq:Ym}.
In \eqref{eq:Ynm} and \eqref{eq:Ym}, the propagation constant of the transmission line $\gamma=\alpha+j\beta$ is in general a complex number, with real part $\alpha$ denoted as the attenuation constant and imaginary part $\beta$ denoted as the phase constant \cite[Chapter~2]{poz11}.
Given the complex nature of $\gamma$, it is hard to gain engineering insights into the entries of $\mathbf{Y}$ from \eqref{eq:Ynm} and \eqref{eq:Ym}.
For this reason, we simplify this general model into three simplified models by making different assumptions on the interconnection lengths $\ell_{n,m}$ and the attenuation constant $\alpha$.

\subsection{Model with Interconnection Lengths Multiple of $\lambda/2$}
\label{sec:A}

Consider the case when $\ell_{n,m}$ is a multiple of $\lambda/2$, with the wavelength $\lambda$ defined as $\lambda=\frac{2\pi}{\beta}$, i.e., when $\ell_{n,m}=\frac{\pi}{\beta}K_{n,m}$, for some $K_{n,m}\in\mathbb{Z}$\footnote{This can be achieved in practice by specifically designing the interconnections of the BD-RIS to have this desired property.
In the case of a tree-connected BD-RIS where each element is only interconnected to the adjacent ones \cite{ner23-1}, this can be obtained by considering an inter-element distance multiple of $\lambda/2$.}.
In this case, we have $e^{j\beta\ell_{n,m}}=e^{-j\beta\ell_{n,m}}=(-1)^{K_{n,m}}$.
In addition, we assume the tunable impedance components of the RIS to be purely reactive, i.e., $Z_{m}=jX_{m}$ with $X_{m}\in\mathbb{R}$ and $Z_{n,m}=jX_{n,m}$ with $X_{n,m}\in\mathbb{R}$.
Thus, \eqref{eq:Ynm} can be simplified into
\begin{align}
[\mathbf{Y}]_{n,m}
&=\frac{-2(-1)^{K_{n,m}}}{jX_{n,m}\left(e^{\alpha\ell_{n,m}}+e^{-\alpha\ell_{n,m}}\right)+Z_0\left(e^{\alpha\ell_{n,m}}-e^{-\alpha\ell_{n,m}}\right)}\label{eq:Ynm-mult1}\\
&=\frac{-(-1)^{K_{n,m}}}{jX_{n,m}\cosh\left(\alpha\ell_{n,m}\right)+Z_0\sinh\left(\alpha\ell_{n,m}\right)}.\label{eq:Ynm-mult2}
\end{align}
Remarkably, in \eqref{eq:Ynm-mult2} the term
$jX_{n,m}\cosh\left(\alpha\ell_{n,m}\right)$ is purely imaginary and the term $Z_0\sinh\left(\alpha\ell_{n,m}\right)$ is purely real.
Thus, if $[\mathbf{Y}]_{n,m}\neq0$, $[\mathbf{Y}]_{n,m}$ must have a non-zero real part, which dissipates real power.
More interestingly, it can be easily verified that the set of possible values of $[\mathbf{Y}]_{n,m}$ given by \eqref{eq:Ynm-mult2} is a circle in the complex plane with radius
\begin{equation}
r
=\frac{1}{2Z_0\sinh\left(\alpha\ell_{n,m}\right)},\label{eq:r}
\end{equation}
and centered in $c=(-(-1)^{K_{n,m}}r,0)$, where any point on the circle can be achieved by properly tuning $X_{n,m}$.
Thus, $[\mathbf{Y}]_{n,m}$ has always a negative real part when $K_{n,m}$ is even, i.e., the interconnection length $\ell_{n,m}$ is multiple of the wavelength $\lambda$, and has always a positive real part when $K_{n,m}$ is odd.
In addition, the radius $r$ decreases with the transmission line attenuation constant $\alpha$ and its length $\ell_{n,m}$, where the value of $\ell_{n,m}$ in turn depends on the RIS array geometry and the relative distance between the RIS elements $n$ and $m$.

For better clarity, we report in Fig.~\ref{fig:ynm} the possible values of $[\mathbf{Y}]_{n,m}$ considering different parameters, by fixing $Z_0=50$~$\Omega$ and $\ell_{n,m}=0.1$~m.
From Fig.~\ref{fig:ynm}, we can make the following two observations.
First, when $\alpha\ell_{n,m}=0$, we have $r=+\infty$ according to \eqref{eq:r}, meaning that $[\mathbf{Y}]_{n,m}$ can assume an arbitrary imaginary part with always zero real part.
In this ideal case, the BD-RIS has full flexibility and does not dissipate real power.
Second, $r$ decreases as the value of $\alpha\ell_{n,m}$ increases, reducing the flexibility of the BD-RIS since $[\mathbf{Y}]_{n,m}$ has imaginary part constrained within the interval $[-r,+r]$.

\begin{figure}[t]
\centering
\includegraphics[width=0.44\textwidth]{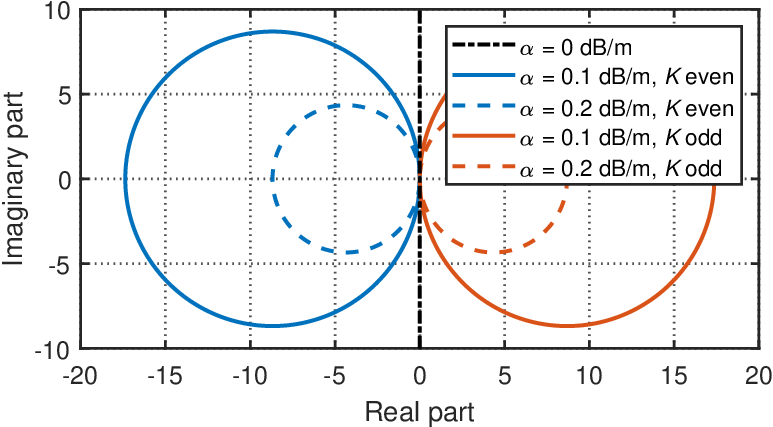}
\caption{Possible values of $[\mathbf{Y}]_{n,m}$ modeled as in \eqref{eq:Ynm-mult2}, with $Z_0=50$~$\Omega$ and $\ell_{n,m}=0.1$~m.}
\label{fig:ynm}
\end{figure}

In the case $\ell_{n,m}$ is a multiple of $\lambda/2$ for all the interconnections departing from port $m$, i.e., $\ell_{n,m}=\frac{\pi}{\beta}K_{n,m}$, for some $K_{n,m}\in\mathbb{Z}$, $\forall n\neq m$, \eqref{eq:Ym} simplifies as
\begin{align}
[\mathbf{Y}]_{m,m}
&=\frac{1}{jX_{m}}-\sum_{n\neq m}(-1)^{K_{n,m}}\cosh\left(\alpha\ell_{n,m}\right)[\mathbf{Y}]_{n,m}\label{eq:Ym-mult1}\\
&=\frac{1}{jX_{m}}+\sum_{n\neq m}\frac{1}{jX_{n,m}+Z_0\tanh\left(\alpha\ell_{n,m}\right)}.\label{eq:Ym-mult2}
\end{align}
Remarkably, $[\mathbf{Y}]_{m,m}$ has always a positive real part, $\forall K_{n,m}\in\mathbb{Z}$, given by
\begin{equation}
\Re\left\{[\mathbf{Y}]_{m,m}\right\}=\sum_{n\neq m}\frac{Z_0\tanh\left(\alpha\ell_{n,m}\right)}{X_{n,m}^2+Z_0^2\tanh^2\left(\alpha\ell_{n,m}\right)},
\end{equation}
while it can have an arbitrary imaginary part depending on the tunable value $X_{m}$.
Differently from the general model developed in Section~\ref{sec:gen-model}, this simplified model has better interpretability, and provides a good balance between tractability and physics consistency.

\subsection{Model with Lossless Interconnections}
\label{sec:B}

Consider the case of lossless interconnections, i.e., $\alpha=0$, and do not assume any constraints on the lengths $\ell_{n,m}$.
In addition, we assume purely reactive tunable impedance components, i.e., $Z_{m}=jX_{m}$ and $Z_{n,m}=jX_{n,m}$.
In this case, \eqref{eq:Ynm} can be simplified into
\begin{equation}
[\mathbf{Y}]_{n,m}=\frac{-1}{jX_{n,m}\cos\left(\beta\ell_{n,m}\right)+jZ_0\sin\left(\beta\ell_{n,m}\right)},\label{eq:Ynm-lossless}
\end{equation}
which is purely imaginary.
Interestingly, $[\mathbf{Y}]_{n,m}$ can be any imaginary number according to the tunable impedance component $X_{n,m}$.

Besides, with $\alpha=0$, \eqref{eq:Ym} boils down to
\begin{align}
[\mathbf{Y}]_{m,m}
&=\frac{1}{jX_{m}}-\sum_{n\neq m}\cos\left(\beta\ell_{n,m}\right)[\mathbf{Y}]_{n,m}\label{eq:Ym-lossless1}\\
&=\frac{1}{jX_{m}}+\sum_{n\neq m}\frac{1}{jX_{n,m}+jZ_0\tan\left(\beta\ell_{n,m}\right)},\label{eq:Ym-lossless2}
\end{align}
which is also purely imaginary.
Note that $[\mathbf{Y}]_{m,m}$ can be set independently of the values of $[\mathbf{Y}]_{n,m}$ by properly choosing $X_{m}$.
Thus, the whole admittance matrix $\mathbf{Y}$ is purely imaginary, indicating that the RIS is lossless and does not dissipate any real power.
This is consistent with the fact that the RIS is assumed to be made of lossless tunable components and lossless transmission lines.
Remarkably, when the transmission lines of the RIS interconnections are lossless, their only effect is the variation in phase of voltages and currents along them.

\subsection{Model with Interconnection Lengths Multiple of $\lambda/2$ and Lossless Interconnections}
\label{sec:C}

Consider the case when $\ell_{n,m}$ is a multiple of $\lambda/2$, i.e., when $\ell_{n,m}=\frac{\pi}{\beta}K_{n,m}$, for some $K_{n,m}\in\mathbb{Z}$, and the interconnections are lossless, i.e., $\alpha=0$.
In this case, we have $e^{\gamma\ell_{n,m}}=e^{-\gamma\ell_{n,m}}=(-1)^{K_{n,m}}$.
In addition, we assume purely reactive tunable impedance components, i.e., $Z_{m}=jX_{m}$ and $Z_{n,m}=jX_{n,m}$.
Thus, \eqref{eq:Ynm} can be simplified into
\begin{equation}
[\mathbf{Y}]_{n,m}=\frac{-(-1)^{K_{n,m}}}{jX_{n,m}},\label{eq:Ynm-ideal}
\end{equation}
which can be an arbitrary imaginary number depending on the tunable impedance component $X_{n,m}$.

In the case $\ell_{n,m}$ is a multiple of $\lambda/2$ for all the interconnections departing from port $m$, i.e., $\ell_{n,m}=\frac{\pi}{\beta}K_{n,m}$, for some $K_{n,m}\in\mathbb{Z}$, $\forall n\neq m$, \eqref{eq:Ym} simplifies as
\begin{align}
[\mathbf{Y}]_{m,m}
&=\frac{1}{jX_{m}}-\sum_{n\neq m}(-1)^{K_{n,m}}[\mathbf{Y}]_{n,m}\label{eq:Ym-ideal1}\\
&=\frac{1}{jX_{m}}+\sum_{n\neq m}\frac{1}{jX_{n,m}},\label{eq:Ym-ideal2}
\end{align}
which can be an arbitrary imaginary number.
Remarkably, when all the interconnection lengths $\ell_{n,m}$ are a multiple of $\lambda$, i.e., $K_{n,m}$ is even, $\forall n,m$, the transmission lines have no visible effect as they apply a phase shift of a multiple of $2\pi$ to the voltages and currents propagating along them.
In this case, the simplified model in \eqref{eq:Ynm-ideal} and \eqref{eq:Ym-ideal2} is equivalent to the model with no transmission lines, i.e., with $\ell_{n,m}=0$, widely considered in previous literature on BD-RIS \cite{she20,ner23-1}.

\section{Lossy BD-RIS Optimization}
\label{sec:optimization}

We have modeled the admittance matrix $\mathbf{Y}$ of BD-RIS with lossy interconnections and derived three simplified models.
In this section, we optimize BD-RIS characterized by these simplified models to assess the impact of losses on the performance of BD-RIS.

\subsection{Optimizing Lossless BD-RIS}

Consider the lossless BD-RIS model of Section~\ref{sec:B}.
The received signal power maximization problem in a BD-RIS-aided \gls{mimo} system writes as
\begin{align}
\underset{X_{n,m},X_m,\mathbf{g},\mathbf{w}}{\mathsf{\mathrm{max}}}\;\;
&P_{T}\left\vert\mathbf{g}\left(\mathbf{H}_{RT}+\mathbf{H}_{R}\boldsymbol{\Theta}\mathbf{H}_{T}\right)\mathbf{w}\right\vert^{2}\label{eq:p1-lossless-obj}\\
\mathsf{\mathrm{s.t.}}\;\;\;
&\eqref{eq:T(Y)},\;\eqref{eq:Ynm-lossless},\;\eqref{eq:Ym-lossless2},\label{eq:p1-lossless-c1}\\
&\left\Vert\mathbf{g}\right\Vert=1,\;\left\Vert\mathbf{w}\right\Vert=1,\label{eq:p1-lossless-c2}
\end{align}
which is solved by jointly optimizing $X_{n,m}$, $X_{m}$, $\mathbf{g}$, $\mathbf{w}$.
To this end, we initialize $\mathbf{g}$ and $\mathbf{w}$ to feasible values and alternate between the following two steps.

\subsubsection{Optimizing $X_{n,m}$ and $X_{m}$ with Fixed $\mathbf{g}$ and $\mathbf{w}$}

When $\mathbf{g}$ and $\mathbf{w}$ are fixed, we update $X_{n,m}$ and $X_{m}$ by solving
\begin{align}
\underset{X_{n,m},X_m}{\mathsf{\mathrm{max}}}\;\;
&P_T\left\vert \bar{h}_{RT}+\bar{\mathbf{h}}_{R}\boldsymbol{\Theta}\bar{\mathbf{h}}_{T}\right\vert^2\label{eq:p2-lossless-obj}\\
\mathsf{\mathrm{s.t.}}\;\;\;
&\eqref{eq:T(Y)},\;\eqref{eq:Ynm-lossless},\;\eqref{eq:Ym-lossless2},\label{eq:p2-lossless-c}
\end{align}
where $\bar{h}_{RT}=\mathbf{g}\mathbf{H}_{RT}\mathbf{w}$, $\bar{\mathbf{h}}_{R}=\mathbf{g}\mathbf{H}_{R}$, and $\bar{\mathbf{h}}_{T}=\mathbf{H}_{T}\mathbf{w}$, which is apparently hard given the non-convex constraints.
Nevertheless, \eqref{eq:p2-lossless-obj}-\eqref{eq:p2-lossless-c} can be simplified by noticing that $\mathbf{Y}$ given by \eqref{eq:Ynm-lossless} and \eqref{eq:Ym-lossless2} can be any purely imaginary symmetric matrix with $[\mathbf{Y}]_{n,m}=0$ if ports $m$ and $n$ are not interconnected, i.e., $\mathbf{Y}=j\mathbf{B}$, $\mathbf{B}=\mathbf{B}^T$, with $[\mathbf{B}]_{n,m}=0$ if $Z_{n,m}=\infty$, where $\mathbf{B}\in\mathbb{R}^{N\times N}$ denotes the RIS susceptance matrix.
Thus, \eqref{eq:p2-lossless-obj}-\eqref{eq:p2-lossless-c} can be rewritten as
\begin{align}
\underset{\mathbf{B}}{\mathsf{\mathrm{max}}}\;\;
&P_{T}\left\vert\bar{h}_{RT}+\bar{\mathbf{h}}_{R}\left(\mathbf{I}+jZ_{0}\mathbf{B}\right)^{-1}\left(\mathbf{I}-jZ_{0}\mathbf{B}\right)\bar{\mathbf{h}}_{T}\right\vert^{2}\label{eq:p3-lossless-obj}\\
\mathsf{\mathrm{s.t.}}\;\;\;
&\mathbf{B}=\mathbf{B}^T,\;[\mathbf{B}]_{n,m}=0\text{ if }Z_{n,m}=\infty,\label{eq:p3-lossless-c}
\end{align}
which is an unconstrained problem in the upper triangular entries of $\mathbf{B}$ that are not forced to zero.
This problem has been solved through the quasi-Newton method in \cite{she20} for group-/fully-connected RISs, and through global optimal closed-form solutions in \cite{ner22} and \cite{ner23-1}, for group-/fully- and forest-/tree-connected RISs, respectively.
Given the optimal susceptance matrix $\mathbf{B}^\star$ solution to \eqref{eq:p3-lossless-obj}-\eqref{eq:p3-lossless-c}, the values of the tunable reactance components $X_{n,m}$ and $X_m$ can be obtained by inverting \eqref{eq:Ynm-lossless} and \eqref{eq:Ym-lossless1} with $\mathbf{Y}=j\mathbf{B}^\star$, respectively, i.e.,
\begin{equation}
X_{n,m}=\frac{1}{\cos\left(\beta\ell_{n,m}\right)[\mathbf{B}^\star]_{n,m}}-Z_0\tan\left(\beta\ell_{n,m}\right),
\end{equation}
\begin{equation}
X_{m}=\frac{-1}{[\mathbf{B}^\star]_{m,m}+\sum_{n\neq m}\cos\left(\beta\ell_{n,m}\right)[\mathbf{B}^\star]_{n,m}}.
\end{equation}
Thus, the phase change of the signals along the transmission lines, accounted by the term $\beta\ell_{n,m}$, does not change the achievable received signal power by \eqref{eq:p3-lossless-obj}-\eqref{eq:p3-lossless-c}.

\subsubsection{Optimizing $\mathbf{g}$ and $\mathbf{w}$ with Fixed $X_{n,m}$ and $X_{m}$}

When $X_{n,m}$ and $X_{m}$ are fixed, also $\boldsymbol{\Theta}$ is fixed, as given by \eqref{eq:T(Y)}, where $\mathbf{Y}$ is determined by \eqref{eq:Ynm-lossless} and \eqref{eq:Ym-lossless2}.
Thus, $\mathbf{g}$ and $\mathbf{w}$ are updated as the dominant left and right singular vectors of $\mathbf{H}_{RT}+\mathbf{H}_{R}\boldsymbol{\Theta}\mathbf{H}_{T}$, respectively, which is globally optimal.

These two steps are alternatively repeated until convergence of the objective \eqref{eq:p1-lossless-obj}.
Notably, this optimization framework is also valid for the model of Section~\ref{sec:C} as it is a special case of the model of Section~\ref{sec:B}.
The complexity of step 1) is given by the complexity of the solutions proposed in \cite{ner22} and \cite{ner23-1}, which is $\mathcal{O}(N^3)$ for both fully- and tree-connected RISs.
Besides, the complexity of step 2) is given by the complexity of computing the singular value decomposition of $\mathbf{H}_{RT}+\mathbf{H}_{R}\boldsymbol{\Theta}\mathbf{H}_{T}$, which is $\mathcal{O}(N_T^3)$, in the case $N_T=N_R$.
Since a RIS is expected to have many more elements than the number of transit antennas, the complexity of our algorithm is dominated by $\mathcal{O}(N^3)$.

\subsection{Optimizing Lossy BD-RIS}

Considering the lossy BD-RIS model of Section~\ref{sec:A}, the received signal power maximization problem writes as
\begin{align}
\underset{X_{n,m},X_m,\mathbf{g},\mathbf{w}}{\mathsf{\mathrm{max}}}\;\;
&P_{T}\left\vert\mathbf{g}\left(\mathbf{H}_{RT}+\mathbf{H}_{R}\boldsymbol{\Theta}\mathbf{H}_{T}\right)\mathbf{w}\right\vert^{2}\label{eq:p1-lossy-obj}\\
\mathsf{\mathrm{s.t.}}\;\;\;
&\eqref{eq:T(Y)},\;\eqref{eq:Ynm-mult2},\;\eqref{eq:Ym-mult2},\label{eq:p1-lossy-c1}\\
&\left\Vert\mathbf{g}\right\Vert=1,\;\left\Vert\mathbf{w}\right\Vert=1,\label{eq:p2-lossy-c2}
\end{align}
which is solved by initializing $\mathbf{g}$ and $\mathbf{w}$ to feasible values and by alternating between the following two steps.

\subsubsection{Optimizing $X_{n,m}$ and $X_{m}$ with Fixed $\mathbf{g}$ and $\mathbf{w}$}

When $\mathbf{g}$ and $\mathbf{w}$ are fixed, $X_{n,m}$ and $X_{m}$ are updated by solving
\begin{equation}
\underset{X_{n,m},X_m}{\mathsf{\mathrm{max}}}\;\;
P_T\left\vert\bar{h}_{RT}+\bar{\mathbf{h}}_{R}\boldsymbol{\Theta}\bar{\mathbf{h}}_{T}\right\vert^2\;\;
\mathsf{\mathrm{s.t.}}\;\;
\eqref{eq:T(Y)},\;\eqref{eq:Ynm-mult2},\;\eqref{eq:Ym-mult2}.\label{eq:p2-lossy}
\end{equation}
Interestingly, the constraints of problem \eqref{eq:p2-lossy} can be tightened as a consequence of the following proposition.

\begin{proposition}
The average real power dissipated by an RIS with admittance matrix $\mathbf{Y}$ modeled as in Section~\ref{sec:A} is reduced if any $\vert\Re\{[\mathbf{Y}]_{i,j}\}\vert$ is reduced, with $i\neq j$.\label{pro}
\end{proposition}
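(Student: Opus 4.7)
The plan is to decompose the real part of the admittance matrix as a sum of positive semi-definite matrices, each proportional to a single $|\Re\{[\mathbf{Y}]_{i,j}\}|$, and then combine this with the standard quadratic-form expression for the time-averaged power dissipated by a reciprocal linear network. Concretely, I would start from $P_{\text{diss}}=\frac{1}{2}\mathbf{v}^H\Re\{\mathbf{Y}\}\mathbf{v}$ for port-voltage phasors $\mathbf{v}\in\mathbb{C}^{N\times 1}$, using reciprocity ($\mathbf{Y}^T=\mathbf{Y}$) to identify the Hermitian part of $\mathbf{Y}$ with $\Re\{\mathbf{Y}\}$.

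The key algebraic step is a structural identity for $\Re\{\mathbf{Y}\}$ under the Section~\ref{sec:A} model. A direct computation from \eqref{eq:Ynm-mult2} gives $|\Re\{[\mathbf{Y}]_{n,m}\}|=Z_0\sinh(\alpha\ell_{n,m})/(X_{n,m}^2\cosh^2(\alpha\ell_{n,m})+Z_0^2\sinh^2(\alpha\ell_{n,m}))$ together with $\Re\{[\mathbf{Y}]_{n,m}\}=\epsilon_{n,m}|\Re\{[\mathbf{Y}]_{n,m}\}|$, where $\epsilon_{n,m}=-(-1)^{K_{n,m}}\in\{-1,+1\}$. Combining this with the real part of \eqref{eq:Ym-mult2} and the identity $\tanh=\sinh/\cosh$ then yields $\Re\{[\mathbf{Y}]_{m,m}\}=\sum_{n\neq m}\cosh(\alpha\ell_{n,m})|\Re\{[\mathbf{Y}]_{n,m}\}|$. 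This lets me write $\Re\{\mathbf{Y}\}=\sum_{i<j}|\Re\{[\mathbf{Y}]_{i,j}\}|\,\mathbf{G}_{ij}$, where each $\mathbf{G}_{ij}$ is supported only on rows and columns $i,j$ with non-zero $2\times 2$ block $\bigl(\begin{smallmatrix}\cosh(\alpha\ell_{i,j}) & \epsilon_{i,j}\\ \epsilon_{i,j} & \cosh(\alpha\ell_{i,j})\end{smallmatrix}\bigr)$. Its two eigenvalues on the support are $\cosh(\alpha\ell_{i,j})\pm 1\ge 0$, so $\mathbf{G}_{ij}\succeq\mathbf{0}$. Therefore $P_{\text{diss}}=\frac{1}{2}\sum_{i<j}|\Re\{[\mathbf{Y}]_{i,j}\}|\,\mathbf{v}^H\mathbf{G}_{ij}\mathbf{v}$ is a non-negative sum of terms, each linear in $|\Re\{[\mathbf{Y}]_{i,j}\}|$, so reducing any single $|\Re\{[\mathbf{Y}]_{i,j}\}|$ strictly reduces the $(i,j)$-th contribution while leaving every other contribution non-negative and unchanged.

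The main obstacle is that, physically, re-tuning $X_{i,j}$ to reduce $|\Re\{[\mathbf{Y}]_{i,j}\}|$ also perturbs the imaginary part of $\mathbf{Y}$, and hence the voltage distribution $\mathbf{v}$ realized for a given incident-wave excitation. The cleanest way to handle this is to read the statement at the level of the quadratic form, i.e., to hold $\mathbf{v}$ fixed and vary $\mathbf{Y}$, which is the natural setting in which the formula $P_{\text{diss}}=\frac{1}{2}\mathbf{v}^H\Re\{\mathbf{Y}\}\mathbf{v}$ is defined. Alternatively, assuming incident waves with zero-mean uncorrelated amplitudes and independent uniform phases, the off-diagonal entries of $E[\mathbf{v}\mathbf{v}^H]$ average to zero and $E[P_{\text{diss}}]$ collapses to $\frac{1}{2}\sum_i\Re\{[\mathbf{Y}]_{i,i}\}\,E[|v_i|^2]$, which by the first identity is an explicit non-negative linear combination of the $|\Re\{[\mathbf{Y}]_{i,j}\}|$ and is therefore manifestly monotone in each of them.
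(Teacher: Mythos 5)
Your proof is correct and follows essentially the same route as the paper: both hold the port voltages $\mathbf{v}$ fixed, express the dissipated power as the quadratic form $\tfrac{1}{2}\mathbf{v}^{H}\Re\{\mathbf{Y}\}\mathbf{v}$, exploit the Section~\ref{sec:A} relation tying $\Re\{[\mathbf{Y}]_{m,m}\}$ to $\sum_{n\neq m}\cosh(\alpha\ell_{n,m})\vert\Re\{[\mathbf{Y}]_{n,m}\}\vert$, and conclude via $\cosh(\alpha\ell_{i,j})\geq 1$. Your positive semidefinite block decomposition is just a global repackaging of the paper's local perturbation step, whose completed square $\tfrac{1}{2}\vert v_i-(-1)^{K_{i,j}}v_j\vert^{2}\Delta$ is exactly your $2\times 2$ block quadratic form.
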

\begin{proof}
Consider the following two RISs: the first has admittance matrix $\mathbf{Y}$ and dissipates a power $P_{\text{av}}$; the second has admittance matrix $\mathbf{Y}^\prime$ obtained from $\mathbf{Y}$ by reducing $\vert\Re\{[\mathbf{Y}]_{i,j}\}\vert$ by $\Delta>0$ for some $i,j$, i.e., $\vert\Re\{[\mathbf{Y}^\prime]_{i,j}\}\vert=\vert\Re\{[\mathbf{Y}]_{i,j}\}\vert-\Delta$, and dissipates a power $P_{\text{av}}^\prime$.
According to \eqref{eq:Ynm-mult2} and \eqref{eq:Ym-mult1}, introducing $\mathbf{G}=\Re\{\mathbf{Y}\}$ and $\mathbf{G}^\prime=\Re\{\mathbf{Y}^\prime\}$, we have $[\mathbf{G}^\prime]_{n,m}=[\mathbf{G}]_{n,m}$ for $m,n\notin\{i,j\}$, and
\begin{equation}
\left[\mathbf{G}^\prime\right]_{i,j}=\left[\mathbf{G}\right]_{i,j}+(-1)^{K_{i,j}}\Delta,\label{eq:Gij}
\end{equation}
\begin{equation}
\left[\mathbf{G}^\prime\right]_{j,i}=\left[\mathbf{G}\right]_{j,i}+(-1)^{K_{i,j}}\Delta,\label{eq:Gji}
\end{equation}
\begin{equation}
\left[\mathbf{G}^\prime\right]_{k,k}=\left[\mathbf{G}\right]_{k,k}-\cosh\left(\alpha\ell_{i,j}\right)\Delta,\label{eq:Gkk}
\end{equation}
for $k\in\{i,j\}$.
Given these two RISs, our goal is to prove that $P_{\text{av}}^\prime\leq P_{\text{av}}$.
To this end, we first recall that average real power dissipated by an $N$-port network is defined as $P_{\text{av}}=\Re\{\mathbf{v}^T\mathbf{i}^*\}/2$, where $\mathbf{v},\mathbf{i}\in\mathbb{C}^{N\times1}$ are the voltage and current vectors at the $N$ ports, respectively \cite[Chapter 4]{poz11}.
Thus, if the admittance matrix $\mathbf{Y}$ of this $N$-port network is symmetric, it is possible to show that
\begin{equation}
P_{\text{av}}
=\frac{1}{2}\mathbf{v}^T\mathbf{G}\mathbf{v}^*
=\frac{1}{2}\sum_{m,n}v_nv_m^*\left[\mathbf{G}\right]_{n,m},\label{eq:Pav}
\end{equation}
where $\mathbf{v}=[v_1,\ldots,v_N]^T$, giving the power dissipated by the first RIS.
Similarly, the power dissipated by the second RIS is
\begin{align}
P_{\text{av}}^\prime
&=\frac{1}{2}\sum_{m,n}v_nv_m^*\left[\mathbf{G}^\prime\right]_{n,m}\\
&=
\frac{1}{2}\sum_{m,n\notin\{i,j\}}v_nv_m^*\left[\mathbf{G}       \right]_{n,m}+
\frac{1}{2}\sum_{m,n\in   \{i,j\}}v_nv_m^*\left[\mathbf{G}^\prime\right]_{n,m},\label{eq:Pav-prime}
\end{align}
since $\left[\mathbf{G}^\prime\right]_{n,m}=\left[\mathbf{G}\right]_{n,m}$ for $m,n\notin\{i,j\}$.
By substituting \eqref{eq:Gij}, \eqref{eq:Gji}, and \eqref{eq:Gkk} into \eqref{eq:Pav-prime}, we obtain
\begin{multline}
P_{\text{av}}^\prime=P_{\text{av}}-\frac{1}{2}\left(
      \vert v_i\vert^2\cosh\left(\alpha\ell_{i,j}\right)\Delta+\right.\\
\left.\vert v_j\vert^2\cosh\left(\alpha\ell_{i,j}\right)\Delta-
(-1)^{K_{i,j}}\left(v_iv_j^*+v_i^*v_j\right)\Delta\right).
\end{multline}
Finally, since $\cosh\left(\alpha\ell_{i,j}\right)\geq1$, we have
\begin{equation}
P_{\text{av}}^\prime\leq P_{\text{av}}-\frac{1}{2}\vert v_i-(-1)^{K_{i,j}}v_j\vert^2\Delta,
\end{equation}
indicating that $P_{\text{av}}^\prime\leq P_{\text{av}}$.
\end{proof}

From Proposition~\ref{pro}, we deduce that it is convenient to select the off-diagonal entries $[\mathbf{Y}]_{n,m}$ with a small real part in magnitude to reduce the power dissipated by the RIS.
Interestingly, Fig.~\ref{fig:ynm} shows that we can select $[\mathbf{Y}]_{n,m}$ with any imaginary part in the interval $[-r,+r]$, with the real part constrained by $\vert\Re\{[\mathbf{Y}]_{n,m}\}\vert\leq r$.
This constraint means that we can select $[\mathbf{Y}]_{n,m}$ in the right semicircle of its possible values if $K_{n,m}$ is even, and in the left semicircle if $K_{n,m}$ is odd, to reduce the dissipated power.

By adding the constraint $\vert\Re\{[\mathbf{Y}]_{n,m}\}\vert\leq r$ for $n\neq m$ to \eqref{eq:p2-lossy}, the resulting problem can be reformulated by expressing $\mathbf{Y}$ as a function of an auxiliary variable $\mathbf{A}\in\mathbb{R}^{N\times N}$ as
\begin{equation}
[\mathbf{Y}]_{n,m}=(-1)^{K_{n,m}}\left(\frac{r}{\sqrt{\frac{[\mathbf{A}]_{n,m}^2}{r^2}+1}}-r\right)+j\frac{[\mathbf{A}]_{n,m}}{\sqrt{\frac{[\mathbf{A}]_{n,m}^2}{r^2}+1}},\label{eq:Anm}
\end{equation}
\begin{equation}
[\mathbf{Y}]_{m,m}=\sum_{n\neq m}\cosh\left(\alpha\ell_{n,m}\right)\left\vert\Re\{[\mathbf{Y}]_{n,m}\}\right\vert+j[\mathbf{A}]_{m,m},\label{eq:Am}
\end{equation}
where $\mathbf{A}$ has been introduced such that $\Im\{[\mathbf{Y}]_{n,m}\}$ varies in the interval $[-r,+r]$ as $[\mathbf{A}]_{n,m}$ varies in $[-\infty,+\infty]$, with $n\neq m$, and $\Im\{[\mathbf{Y}]_{m,m}\}=[\mathbf{A}]_{m,m}$.
Furthermore, in the case of a lossless RIS, i.e., $r=+\infty$, \eqref{eq:Anm} and \eqref{eq:Am} boil down to $[\mathbf{Y}]_{n,m}=j[\mathbf{A}]_{n,m}$ and $[\mathbf{Y}]_{m,m}=j[\mathbf{A}]_{m,m}$, respectively.
Thus, the resulting problem writes as
\begin{align}
\underset{\mathbf{A}}{\mathsf{\mathrm{max}}}\;
&P_{T}\left\vert\bar{h}_{RT}+\bar{\mathbf{h}}_{R}\left(\mathbf{I}+Z_{0}\mathbf{Y}\right)^{-1}\left(\mathbf{I}-Z_{0}\mathbf{Y}\right)\bar{\mathbf{h}}_{T}\right\vert^{2}\label{eq:p3-lossy-obj}\\
\mathsf{\mathrm{s.t.}}\;\;
&\eqref{eq:Anm},\;\eqref{eq:Am},\;\mathbf{A}=\mathbf{A}^T,\;[\mathbf{A}]_{n,m}=0\text{ if }Z_{n,m}=\infty,\label{eq:p3-lossy-c}
\end{align}
which is an unconstrained problem in the upper triangular entries of $\mathbf{A}$ that are not forced to zero.
Remarkably, \eqref{eq:p3-lossy-obj}-\eqref{eq:p3-lossy-c} is a generalization of \eqref{eq:p3-lossless-obj}-\eqref{eq:p3-lossless-c} since in the case of a lossless RIS, constraints \eqref{eq:Anm} and \eqref{eq:Am} become $[\mathbf{Y}]_{n,m}=j[\mathbf{A}]_{n,m}$ and $[\mathbf{Y}]_{m,m}=j[\mathbf{A}]_{m,m}$, respectively, reducing \eqref{eq:p3-lossy-obj}-\eqref{eq:p3-lossy-c} to \eqref{eq:p3-lossless-obj}-\eqref{eq:p3-lossless-c}.
Since \eqref{eq:p3-lossy-obj}-\eqref{eq:p3-lossy-c} is unconstrained, it can be solved with the quasi-Newton method to find a stationary point.
Given the optimal $\mathbf{A}^\star$ solution to \eqref{eq:p3-lossy-obj}-\eqref{eq:p3-lossy-c}, the admittance matrix $\mathbf{Y}$ is first obtained throguh \eqref{eq:Anm} and \eqref{eq:Am}.
Then, the values of the tunable reactance components $X_{n,m}$ and $X_m$ are obtained from $\mathbf{Y}$ by inverting \eqref{eq:Ynm-mult2} and \eqref{eq:Ym-mult1}, respectively.

\subsubsection{Optimizing $\mathbf{g}$ and $\mathbf{w}$ with Fixed $X_{n,m}$ and $X_{m}$}

When $X_{n,m}$ and $X_{m}$ are fixed, also $\boldsymbol{\Theta}$ is fixed, as given by \eqref{eq:T(Y)}, where $\mathbf{Y}$ is determined by \eqref{eq:Ynm-mult2} and \eqref{eq:Ym-mult2}.
Thus, $\mathbf{g}$ and $\mathbf{w}$ are updated as the dominant left and right singular vectors of $\mathbf{H}_{RT}+\mathbf{H}_{R}\boldsymbol{\Theta}\mathbf{H}_{T}$, respectively, which is globally optimal.

These two steps are repeated until convergence of \eqref{eq:p1-lossy-obj}.
The complexity of step 1) is given by the complexity of solving the unconstrained problem in \eqref{eq:p3-lossy-obj}-\eqref{eq:p3-lossy-c} via the quasi-Newton method.
Considering the quasi-Newton method with BFGS update, the computational
complexity for each iteration grows with the square of the number of variables optimized \cite{she20}.
For fully-connected RISs, where we have $N(N+1)/2$ tunable components, the complexity is $\mathcal{O}(N^2(N+1)^2/4)$, and for tree-connected RISs, where we have $2N-1$ tunable components, the complexity is $\mathcal{O}((2N-1)^2)$.
Besides, the complexity of step 2) is given by $\mathcal{O}(N_T^3)$, assuming $N_T=N_R$.
Since $N\gg N_T$ in practice, our algorithm has complexity $\mathcal{O}(N^4)$ for fully-connected RISs and $\mathcal{O}(N^2)$ for tree-connected RISs.

\section{Numerical Results}
\label{sec:results}

\begin{figure*}[t]
\centering
\subfigure[]{
\includegraphics[height=0.32\textwidth]{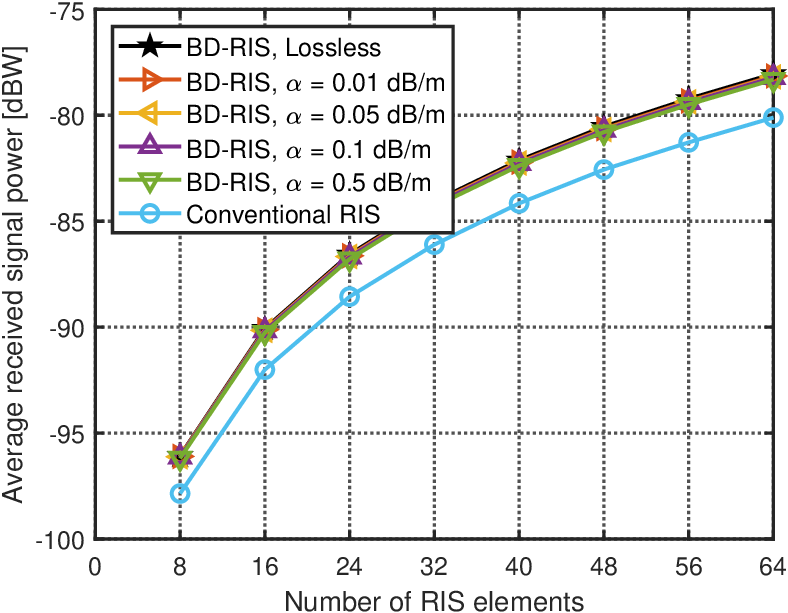}
\label{fig:loc-N-pow}
}
\subfigure[]{
\includegraphics[height=0.32\textwidth]{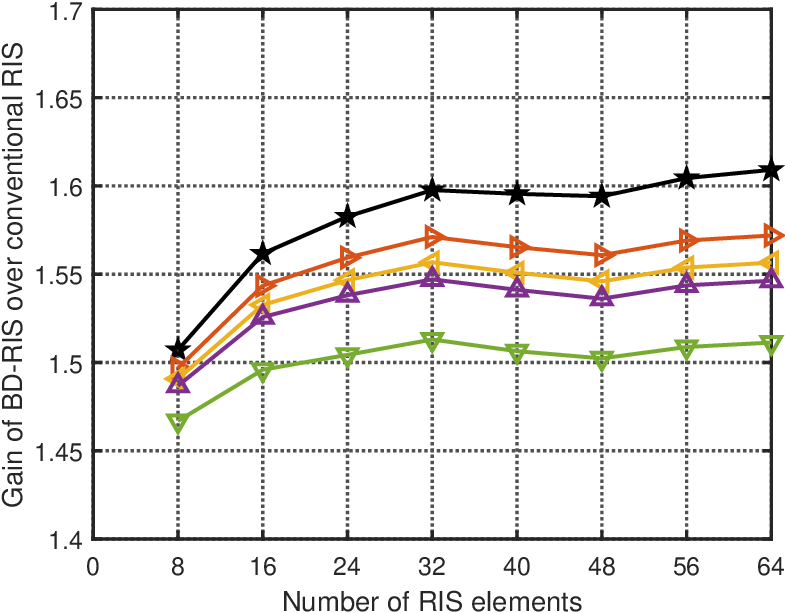}
\label{fig:loc-N-gain}
}
\caption{(a) Received signal power and (b) gain of BD-RIS over conventional RIS versus the number of RIS elements, for localized RIS.}
\label{fig:loc-N}
\end{figure*}

\begin{figure*}[t]
\centering
\subfigure[]{
\includegraphics[height=0.32\textwidth]{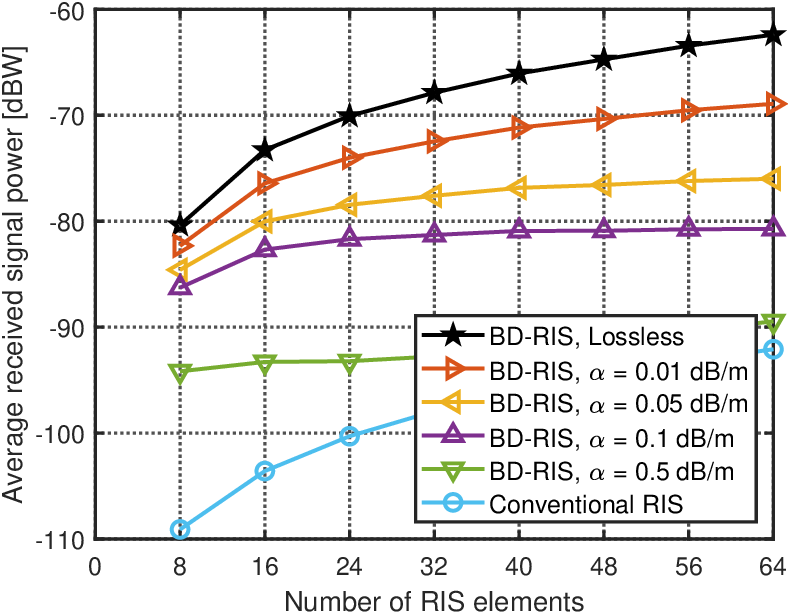}
\label{fig:dis-N-pow}
}
\subfigure[]{
\includegraphics[height=0.32\textwidth]{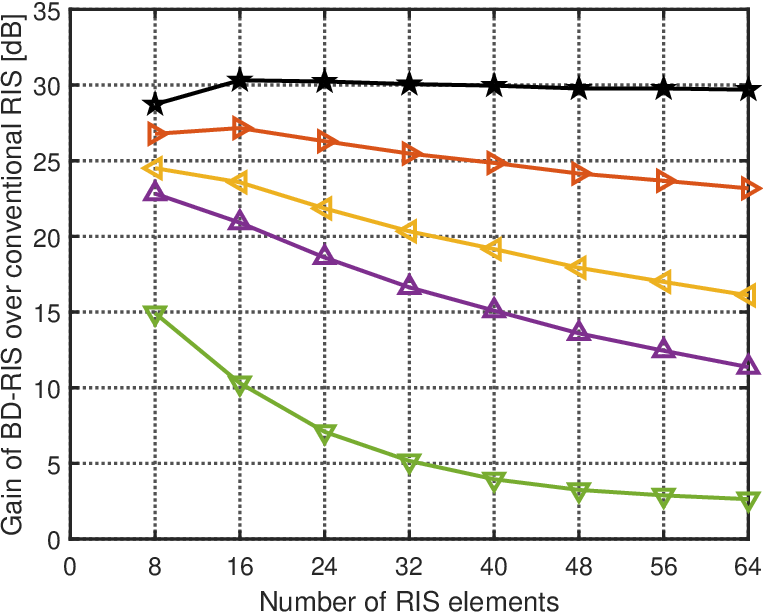}
\label{fig:dis-N-gain}
}
\caption{(a) Received signal power and (b) gain of BD-RIS over conventional RIS versus the number of RIS elements, for distributed RIS.}
\label{fig:dis-N}
\end{figure*}

\begin{figure*}[t]
\centering
\subfigure[]{
\includegraphics[height=0.32\textwidth]{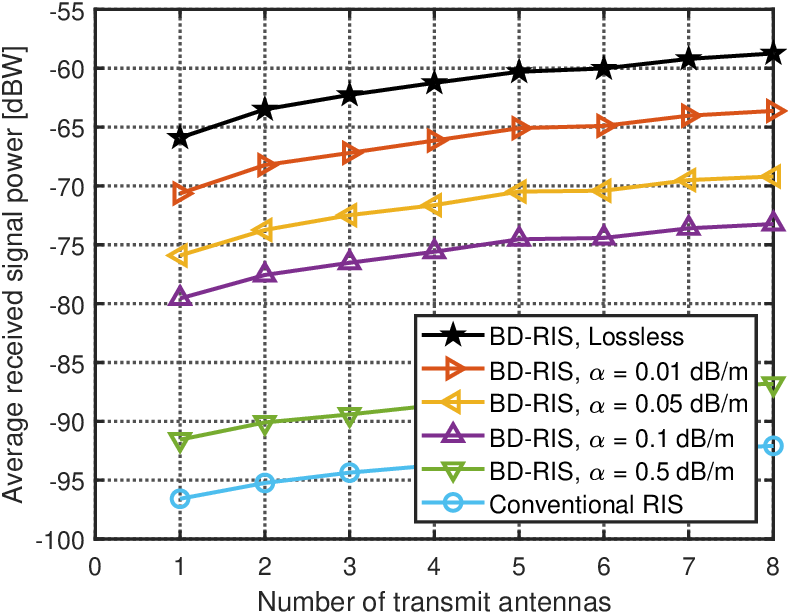}
\label{fig:dis-NT-pow}
}
\subfigure[]{
\includegraphics[height=0.32\textwidth]{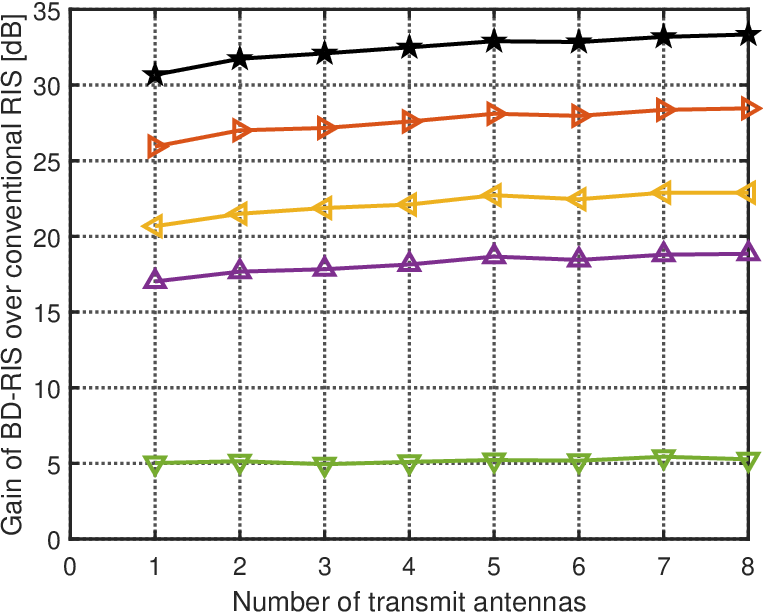}
\label{fig:dis-NT-gain}
}
\caption{(a) Received signal power and (b) gain of BD-RIS over conventional RIS versus the number of transmit antennas, for distributed RIS.}
\label{fig:dis-NT}
\end{figure*}

\begin{figure*}[t]
\centering
\subfigure[]{
\includegraphics[height=0.32\textwidth]{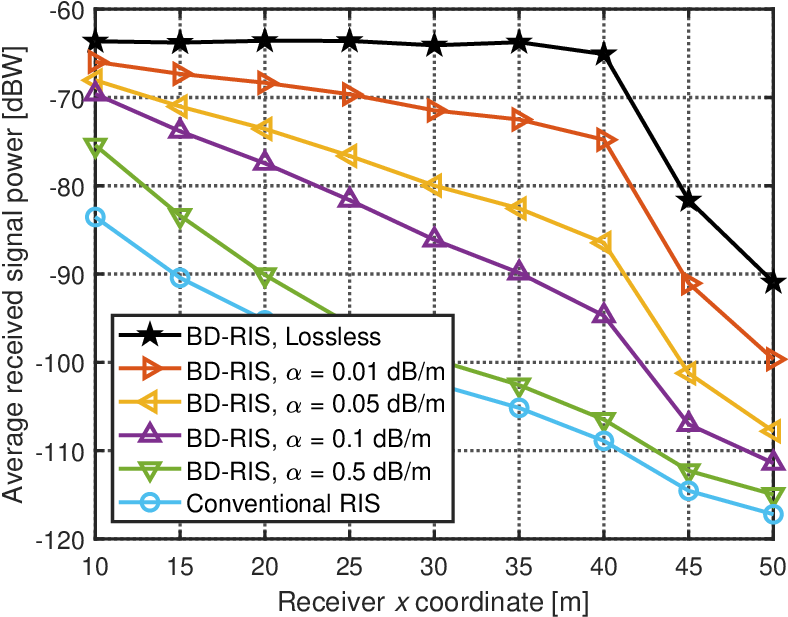}
\label{fig:dis-x-pow}
}
\subfigure[]{
\includegraphics[height=0.32\textwidth]{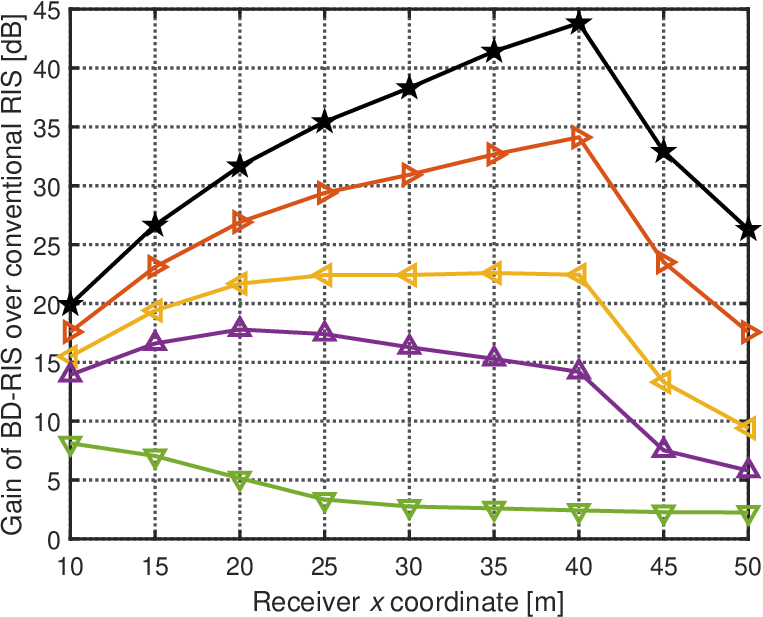}
\label{fig:dis-x-gain}
}
\caption{(a) Received signal power and (b) gain of BD-RIS over conventional RIS versus the $x$ coordinate of the receiver, for distributed RIS.}
\label{fig:dis-x}
\end{figure*}

To evaluate the performance of localized and distributed BD-RIS with lossy interconnections, we consider the system represented in Fig.~\ref{fig:deployment}, where the transmitter and receiver are located at $(0,0,0)$ and $(20,0,0)$, respectively.
In the case of localized RIS, the RIS is a \gls{ula} centered in $(20,0,2)$ and with an inter-element distance $0.05$~m, as shown in Fig.~\ref{fig:deployment-loc}.
We model the path-gain of the channels $\mathbf{H}_{R}^{\text{Loc}}$ and $\mathbf{H}_{T}^{\text{Loc}}$ through the distance-dependent model and their small-scale fading as \gls{iid} Rayleigh distributed, i.e., $\mathbf{H}_{i}^{\text{Loc}}=\sqrt{\rho_{i}}\widetilde{\mathbf{H}}_{i}$, where $\rho_{i}=C_{0}d_{i}^{-a}$ and $\widetilde{\mathbf{H}}_{i}\sim\mathcal{CN}(\mathbf{0},\mathbf{I})$, for $i\in\{R,T\}$.
In the case of distributed RIS, the RIS is a \gls{ula} with elements uniformly placed between $(0,0,2)$ and $(40,0,2)$, as shown in Fig.~\ref{fig:deployment-dis}.
Accordingly, the channels are given by $\mathbf{H}_{R}^{\text{Dis}}=\widetilde{\mathbf{H}}_{R}\mathbf{R}_{R}^{1/2}$ and $\mathbf{H}_{T}^{\text{Dis}}=\mathbf{R}_{T}^{1/2}\widetilde{\mathbf{H}}_{T}$, where $\mathbf{R}_{i}=\text{diag}(\boldsymbol{\rho}_{i})$, $[\boldsymbol{\rho}_{i}]_{n}=C_{0}[\mathbf{d}_{i}]_{n}^{-a}$, and $\widetilde{\mathbf{H}}_{i}\sim\mathcal{CN}(\mathbf{0},\mathbf{I})$, for $i\in\{R,T\}$.
The direct channel is assumed to be fully obstructed, i.e., $\mathbf{H}_{RT}=\mathbf{0}$, and we set $C_{0}=-30$~dB, $a=4$, and $P_{T}=10$~W.
In the following, we report the performance of tridiagonal BD-RIS, a tree-connected BD-RIS that is proven to be the least complex BD-RIS architecture able to achieve the performance upper bound in single-user systems \cite{ner23-1}, and compare it with conventional RIS.
Recalling that tridiagonal BD-RIS presents interconnections only between adjacent RIS elements \cite{ner23-1}, its interconnection lengths are all equal to the inter-element distance.
Furthermore, we recall that the circuit complexity of a tridiagonal BD-RIS is $2N-1$ tunable impedance components \cite{ner23-1}, while the circuit complexity of a single-connected RIS is given by $N$ tunable impedance components (each connecting a RIS element to ground).

In Fig.~\ref{fig:loc-N}, we report the received signal power obtained with localized RIS versus the number of RIS elements, with $N_R=1$ and $N_T=1$ for simplicity.
We can make the following observations.
\textit{First}, BD-RIS achieves a higher received signal power than conventional RIS thanks to the additional flexibility, both with lossless and lossy interconnections.
\textit{Second}, lossless BD-RIS achieves the highest received signal power since no power is dissipated by the BD-RIS circuit.
Specifically, Fig.~\ref{fig:loc-N-gain} shows that lossless BD-RIS offers a gain over conventional RIS of up to 1.62, as discussed in Section~\ref{sec:G-Loc}.
\textit{Third}, the received signal power achieved by BD-RIS decreases as the attenuation constant $\alpha$ increases.
However, the performance of localized BD-RIS is only slightly impacted by the losses given the short inter-element distance.

In Fig.~\ref{fig:dis-N}, we report the received signal power obtained with distributed RIS versus the number of RIS elements, with $N_R=1$ and $N_T=1$.
We can make the following observations.
\textit{First}, BD-RIS always achieves a higher received signal power than conventional RIS.
\textit{Second}, lossless BD-RIS achieves the highest received signal power, offering gains of up to 30~dB over conventional RIS.
These massive gains are because the \gls{em} signal can propagate without losses between the BD-RIS elements, which are spread over a long line.
\textit{Third}, the received signal power of distributed BD-RIS visibly decreases as the attenuation constant $\alpha$ increases, given the significantly long inter-element distance.
Nevertheless, lossy BD-RIS can still achieve more than an order of magnitude of gain over conventional RIS.
\textit{Fourth}, comparing Fig.~\ref{fig:loc-N}(a) and Fig.~\ref{fig:dis-N}(a), we notice that increasing the number of elements in a localized conventional RIS can potentially match the performance of a distributed BD-RIS.
For example, a received signal power of $P_R=-80$~dBW can be obtained by a localized conventional RIS with $N=64$ RIS elements ($64$ tunable impedance components) or by a distributed BD-RIS with $\alpha=0.05$~dB/m with $N=16$ RIS elements ($2N-1=31$ tunable impedance components).
However, achieving this may demand significantly more tunable components and result in greater control overhead compared to a distributed BD-RIS.

To investigate the impact of $N_R$ and $N_T$ on the system performance, we report in Fig.~\ref{fig:dis-NT} the received signal power of distributed RIS versus the number of transmit antennas, with $N=32$ and $N_R=2$.
We observe that the received signal power of BD-RIS and conventional RIS increases with $N_T$, given the diversity gain offered by the multiple transmit antennas.
Besides, the gain of BD-RIS over conventional RIS is approximately stable for any value of $N_T$.

In Fig.~\ref{fig:dis-x}, we report the received signal power obtained with distributed RIS for different locations of the receiver.
Specifically, the receiver is located in $(x,0,0)$, where $x\in[10,50]$, while we fix $N=32$, $N_R=2$, and $N_T=2$.
We observe that the received signal power decreases as $x$ increases since $x$ also represents the distance between the transmitter and receiver.
The received signal power decreases significantly after $x=40$~m, as this is the location of the last distributed RIS element.
Besides, Fig.~\ref{fig:dis-x-gain} shows that the gain of lossless BD-RIS over conventional RIS increases with $x$ reaching values as high as several orders of magnitude until $x=40$~m.

\section{Conclusion}
\label{sec:conclusion}

We propose the concept of distributed RIS, whose elements are distributed over a wide region, in opposition to localized RIS, as commonly considered in previous literature.
We derive its scaling laws and analyze the gain of distributed RIS over localized RIS, and of distributed BD-RIS over distributed conventional RIS.
The derived gains show that lossless distributed BD-RIS can offer gains as high as several orders of magnitude over distributed conventional RIS and localized BD-RIS.
In particular, these substantial gains are enabled by the tunable impedance components interconnecting the BD-RIS elements, which allow the \gls{em} signal to effectively reach the receiver by propagating within the BD-RIS circuit.

To assess the practical performance of distributed BD-RIS, we model BD-RIS with lossy interconnections by using transmission line theory.
More precisely, we model the BD-RIS interconnections as tunable impedance components in series with lossy transmission lines and derive the expression of the BD-RIS admittance matrix.
Since it is hard to gain engineering insights from the obtained model, we also derive three simplified models through different assumptions.
Finally, we optimize lossy BD-RIS based on the proposed models and evaluate its performance.
Numerical results show that the performance of localized BD-RIS is only slightly impacted by losses because of the short interconnection lengths.
Furthermore, distributed BD-RIS can achieve orders of magnitude of gains over conventional RIS, even with low losses.
We identify two future research directions.
First, the impact of the delay spread due to long interconnections on the performance should be investigated by considering a frequency-selective channel model.
Second, the optimal deployment of both localized and distributed RISs should be explored to maximize the coverage in a given environment.

\bibliographystyle{IEEEtran}
\bibliography{IEEEabrv,main}

\end{document}